\documentclass[journal]{IEEEtran}
\usepackage{amsfonts}
\usepackage{amsmath}
\usepackage{graphics}
\usepackage{enumerate}
\usepackage{amssymb}
\usepackage{amsthm}
\newcommand{\appref}[1]{\hyperref[#1]{{Appendix~\ref*{#1}}}}
\newcommand{\be}{\begin{eqnarray} \begin{aligned}}
\newcommand{\ee}{\end{aligned} \end{eqnarray} }
\newcommand{\benn}{\begin{eqnarray*} \begin{aligned}}
\newcommand{\eenn}{\end{aligned} \end{eqnarray*}}

\newcommand*{\cE}{\mathcal{E}}
\newcommand*{\cF}{\mathcal{F}}
\newcommand*{\cG}{\mathcal{G}}
\newcommand*{\cH}{\mathcal{H}}

\newcommand*{\cL}{\mathcal{L}}

\newcommand*{\tr}{\mathop{\mathrm{tr}}\nolimits}

\newcommand{\bc}{\begin{center}}
\newcommand{\ec}{\end{center}}


\newtheorem{theorem}{Theorem}[section]
\newtheorem{lemma}[theorem]{Lemma}

\newtheorem{corollary}[theorem]{Corollary}


\usepackage{amsfonts}

\def\01{\{0,1\}}

\newcommand{\ket}[1]{|#1\rangle}
\newcommand{\bra}[1]{\langle#1|}

\newcommand{\proj}[1]{|#1\rangle\langle#1|}


\newcommand{\diag}{\mbox{diag}}

\newcommand*{\mylabel}[1]{\label{#1}}

\begin{document}
\title{The entropy power inequality\\
 for quantum systems}
\author{Robert K\"onig\thanks{R. K\"onig is with the Institute for Quantum Computing and the Department of Applied Mathematics, University of Waterloo, Waterloo, ON N2L 3G1, Canada, email {\sf rkoenig@uwaterloo.ca}.} and Graeme Smith\thanks{ G. Smith is with IBM TJ Watson Research Center, 1101 Kitchawan Road, Yorktown Heights, NY 10598, USA.}} 
\maketitle
\begin{abstract}When two independent analog signals, $X$ and $Y$
are added together giving $Z=X+Y$, the entropy of $Z$, $H(Z)$, is not
a simple function of the entropies $H(X)$ and
$H(Y)$, but rather depends on the details of $X$ and $Y$'s
distributions.  Nevertheless, the entropy power inequality
(EPI), which states that $e^{2H(Z)} \geq e^{2H(X)} + e^{2H(Y)}$, gives
a very tight restriction on the entropy of $Z$.  This inequality
has found many applications in information theory and statistics.  The
quantum analogue of
adding two random variables is the combination of two independent
bosonic modes at a beam splitter.  The purpose
of this work is to give a detailed outline of the proof of two
separate generalizations of the entropy power inequality to the
quantum regime.  Our proofs are similar in spirit to standard
classical proofs of the EPI, but some new quantities and ideas are
needed in the quantum setting.
Specifically, we find a new quantum de Bruijin identity relating
entropy production under  diffusion to a divergence-based
quantum Fisher information. Furthermore, this Fisher information exhibits  certain convexity properties in the context of beam splitters. 
\end{abstract}
\tableofcontents
\section{Introduction}
\subsection{The classical entropy power inequality: formulation}
In his 1948 paper~\cite{Shannon48}, Shannon identified the (differential) entropy 
\begin{align*}
H(X)=-\int f_X(x)\log f_X(x)dx
\end{align*} of a random variable~$X$ 
with support on~$\mathbb{R}^n$
 as the relevant measure for its information content. Using differential entropies, Shannon obtained an expression  for  the capacity of a noisy channel involving continuous signals. As a corollary, he  explicitly computed the 
capacity of the  white thermal noise  channel. These are   among his most well-known contributions to continous-variable classical information theory. However, Shannon also recognized another fundamental property of differential entropy  called the entropy power inequality (EPI).

Consider a situation where an input signal (or random variable)~$X$ is combined with an independent signal~$Y$, resulting in an output~$X+Y$ with probability density function given by the convolution 
\begin{align}
f_{X+Y}(z)&=\int  f_X(x)f_Y(z-x) d^nx\qquad\textrm{ for all }z\in\mathbb{R}^n\ .\mylabel{eq:convolutionclassical}
\end{align}
We may think of the  operation
\begin{align}
(f_X, f_Y)\mapsto f_{X+Y}\mylabel{eq:additionlawclassical}
\end{align}
as an `addition law' on the space of probability density functions.  Shannon's entropy power inequality relates the entropies of the input- and output variables in this process. It is commonly stated as
\begin{align}
e^{2H(X+Y)/n}\geq e^{2H(X)/n}+e^{2H(Y)/n}\ .\mylabel{eq:epiclassical}
\end{align}
It should be stressed that inequality~\eqref{eq:epiclassical} holds without assumptions on the particular form of the probability density functions~$f_X$, $f_Y$ beyond certain regularity conditions. 
Specialized to the case  where one of the arguments in~\eqref{eq:additionlawclassical}, say, $f_Y$, is a fixed probability density function (e.g., a Gaussian), inequality~\eqref{eq:epiclassical} immediately gives a lower bound on the output entropy of the so-called additive noise channel $X\mapsto X+Y$ (e.g., the white thermal noise channel) in terms of the entropy~$H(X)$ of the input signal.  Given this fact and the fundamental nature of the operation~\eqref{eq:additionlawclassical},  it is is hardly suprising that the EPI~\eqref{eq:epiclassical} has figured prominently in an array of information-theoretic works since its introduction in~1948. For example, it has been used to obtain converses for the capacity of the Gaussian broadcast channel, and to show convergence in relative entropy for the central limit theorem~\cite{Barron86central}. Other uses of the entropy power inequality in multi-terminal information theory may be found e.g., in~\cite{ElGamal02}.

The expression entropy power stems from the fact that a  centered normal (or Gaussian) distribution~$f_X(x)=\frac{1}{(2\pi\sigma^2)^{n/2}}e^{-\|x\|^2/(2\sigma^2)}$ on $\mathbb{R}^n$ with variance~$\sigma^2$ has entropy~$H(X)=\frac{n}{2}\log 2\pi e\sigma^2$. Hence
the quantity $\frac{1}{2\pi e}e^{2H(X)/n}$ is the variance of $X$, commonly referred to as its power or average energy.  For a general random variable~$Z$, the quantity~$\frac{1}{2\pi e}e^{2H(Z)/n}$  is the power of a Gaussian variable with the same  entropy as~$Z$. This fact has been connected~\cite{CostaThomas84}  to Brunn-Minkowski-type inequalities bounding the volume of the set-sum of two sets in terms of their spherical counterparts, which in turn inspired generalizations to free probability theory~\cite{SzarekVoicu96}. In the following, we will omit the factor~$\frac{1}{2\pi e}$ and refer to~$e^{2H(Z)/n}$ as the {\em entropy power of~$Z$}.

The entropy power inequality~\eqref{eq:epiclassical} can be reeexpressed as a kind of `concativity' property of the entropy power under a slightly modified `addition rule'~$(X,Y)\mapsto X\boxplus_{1/2}Y$.  To define the latter, let 
$\mu X$  denote the random variable with probability density
$f_{\mu X}(x)=\frac{1}{\mu}f_X(\frac{x}{\mu})$, $x\in \mathbb{R}^n$
for any scalar $\mu>0$. We define $X\boxplus_{1/2}Y$ as the result of rescaling both~$X$ and~$Y$ by~$1/\sqrt{2}$  and subsequently taking their 
sum (see Eq.~\eqref{eq:additionlawclassicalcov} below). Because differential entropies satisfy the scaling property~$H(\mu X)=H(X)+n\log \mu$ for  $\mu>0$, this leads to the following reformulation of the entropy power inequality~\eqref{eq:epiclassical}:
\begin{align}
\tag{cEPI} 
e^{2H(X\boxplus_{1/2} Y)/n}\geq \frac{1}{2}e^{2H(X)/n}+
\frac{1}{2}e^{2H(Y)/n}\ .\mylabel{eq:epiclassicalcov}
\end{align}
Observe that the rhs.~is the equal-weight average of the entropy powers of $X$ and $Y$.

It turns out that the entropy $H(X)$ itself satisfies a concativity inequality analogous to~\eqref{eq:epiclassicalcov} (see~\eqref{eq:liebversion} below). More generally, introduce a parameter $0<\lambda<1$ controlling the  relative weight of~$X$ and~$Y$ and consider the  addition rule
\begin{align}
(f_X,f_Y)&\mapsto f_{X\boxplus_\lambda Y}\ \textrm{ where }\   f_{X\boxplus_\lambda Y}=f_{\sqrt{\lambda}X+\sqrt{1-\lambda}Y}\ .  \mylabel{eq:additionlawclassicalcov}
\end{align}
Note that this map is covariance-preserving when applied to two random-variables~$X$ and~$Y$ whose first moments and covariances  coincide, a property not shared by convolution without rescaling (i.e., ~\eqref{eq:additionlawclassical}). 

 The concativity inequality for the entropy takes the form
\begin{align}\tag{cEPI$^\prime$}
H(X\boxplus_\lambda Y)\geq \lambda H(X)+(1-\lambda)H(Y)\qquad\textrm{ for } 0<\lambda<1\ .\mylabel{eq:liebversion}
\end{align}
The set of inequalities~\eqref{eq:liebversion} can be shown~\cite{Lieb78,VerduGuo06} to be equivalent to~\eqref{eq:epiclassicalcov}. We will refer to both~\eqref{eq:epiclassicalcov},~\eqref{eq:liebversion} as entropy power inequalities; both express a type of concativity under the addition law~\eqref{eq:additionlawclassicalcov}.

\subsection{The quantum entropy power inequality: formulation}
Here we formulate and subsequently outline a proof of a quantum version of the entropy power inequality. It is phrased in terms of  the von Neumann entropy
\begin{align*}
S(X)&=-\tr(\rho_X\log\rho_X)
\end{align*}
of a state~$\rho_X$ on a separable infinite-dimensional Hilbert space describing $n$~bosonic degrees of freedom. In quantum optics terminology, it gives a relation between the entropies of states going in and coming out of a beamsplitter.   Central to this program is the identification of the correct definitions applying to the quantum case. This is what we consider our main contribution.

 Our `quantum addition law' generalizes~\eqref{eq:additionlawclassicalcov} and  is most conveniently described for the case of  two single-mode $(n=1)$ states~$\rho_X$ and $\rho_Y$ (although the generalization to $n>1$~modes is straightforward, see Section~\ref{sec:beamsplitterdef}). It takes the following form: the Gaussian unitary~$U_\lambda$ corresponding to a beam-splitter with transmissivity~$0<\lambda<1$ is applied to the two input modes. This is followed by tracing out the second mode. Formally, we have (see Section~\ref{sec:beamsplitterdef} for more precise definitions) 
\begin{align}
\begin{matrix}
(\rho_X,\rho_Y)\mapsto \rho_{X\boxplus_\lambda Y}=\cE_\lambda(\rho_X\otimes\rho_Y)\qquad\\
\qquad\textrm{ where } \cE_\lambda(\rho)=\tr_2 U_\lambda \rho U_\lambda^\dagger\ .
\end{matrix}~\mylabel{eq:additionlawquantumcov}
\end{align}
This choice of `addition law' is motivated by the fact that the resulting position- and momentum operators~$(Q,P)$ of the output mode in the Heisenberg picture are given by
\begin{align}
\begin{matrix}
Q&=\sqrt{\lambda }Q_1+\sqrt{1-\lambda}Q_2\\
P&=\sqrt{\lambda }P_1+\sqrt{1-\lambda}P_2\ 
\end{matrix}\mylabel{eq:XYlambdaaddquantum}
\end{align}
in terms of the original operators of the first $(Q_1,P_1)$ and second $(Q_2,P_2)$ mode. Hence~\eqref{eq:XYlambdaaddquantum} indeed mimics the classical addition rule~\eqref{eq:additionlawclassicalcov} in phase space.

Having defined  an addition law, we can state the generalizations of~\eqref{eq:epiclassicalcov} and~\eqref{eq:liebversion}
  to the quantum setting. They   are
\begin{align}
e^{S(X\boxplus_{1/2} Y)/n}&\geq \frac{1}{2}e^{S(X)/n}+\frac{1}{2}e^{S(Y)/n}\ ,\mylabel{eq:epiquantumcov}\tag{qEPI}\\
S(X\boxplus_\lambda Y)&\geq \lambda S(X)+(1-\lambda)S(Y)\qquad\textrm{ for } \lambda\in [0,1]\ .\tag{qEPI$^{\prime}$}\mylabel{eq:liebversionquantum}
\end{align}
Observe that~\eqref{eq:epiquantumcov} differs from~\eqref{eq:epiclassicalcov} by the absence of a factor~$2$ in the exponents. This  may be attributed to the fact that the phase space of $n$~bosonic modes is $2n$-dimensional. We emphasize, however, that neither~\eqref{eq:epiquantumcov} nor~\eqref{eq:liebversionquantum} appear to follow in a straightforward manner from their classical counterparts. A further distinction from the classical case is that, to the best of our knowledge, Eq.~\eqref{eq:epiquantumcov} and~\eqref{eq:liebversionquantum} do not appear to be equivalent. Note also that Guha established~\eqref{eq:liebversionquantum} for the special case~$\lambda=1/2$ using different techniques~\cite[Section 5.4.2]{GuhaPhD}.

 We will not discuss implications of these inequalities here (but see~\cite{KoeGrae12} for an upper bound on classical capacities), but expect them to be widely applicable.
We point out that the importance of a quantum analog of the entropy power inequality in the context of additive noise channels was recognized in ealier work by Guha et al.~\cite{Guhaetal07}. These authors proposed a different generalization of~\eqref{eq:epiclassicalcov}, motivated by the fact that the average photon number~$\tr(a^\dagger a\rho)$ of a state~$\rho$ is the natural analog of the variance of a (centered) probability distribution. In analogy to the definition of (classical) entropy power, they defined the  photon number of a state~$\rho$ as the average photon number of a Gaussian state~$\sigma$ with identical  entropy. Their conjectured photon number inequality states that this quantity is concave under the addition rule~$(X,Y)\mapsto X\boxplus_\lambda Y$. The photon number inequality has been shown to hold in the case of Gaussian states~\cite{GuhaPhD}. 

We find that our more literal generalization~\eqref{eq:epiquantumcov}  (for $\lambda=1/2$) allows us to translate a proof of the classical entropy power inequality to the quantum setting. The formulation~\eqref{eq:epiquantumcov} also yields tight bounds on classical capacities~\cite{KoeGrae12} for thermal noise associated with the transmissivity-$1/2$-beamsplitter. We conjecture that the quantity $e^{S(X)/n}$ is concave with respect to the addition rule~$\boxplus_\lambda$ for all $0<\lambda<1$ (a fact which is trivial in the classical case due to the scaling property of entropy). If shown to be true, this establishes tight bounds on the classical capacity of thermal noise channels, for all transmissivities.

\subsection{The classical entropy power inequality: proof sketch\mylabel{sec:classicalepiproofreview}}
It is instructive
to review the basic elements of the proof of the classical entropy power inequality. 
Roughly, there are two different proof methods: the first one relies on Young's inequality in the version given by Beckner~\cite{Beckner75}, and Brascamp and Lieb~\cite{BrascampLieb76}. The second one relies on the relation between entropy and Fisher information, known as de Bruijin's identity. For a detailed account of the general form of these two approaches, we refer to~\cite{Demboetal91}. While~\cite{Demboetal91} gives a good overview, it is worth mentioning that more recently,  Rioul has found proofs of the EPI that manage to rely entirely on the use of `standard' information measures such as mutual information~\cite{liuvis05,rioultwo}. His work also involves ideas from the second approach.

Here we focus on the second type of proof, originally introduced by Stam~\cite{Stam59} and further simplified and made more rigorous by Blachman~\cite{Blachman65} and others~\cite{Barron86central,Demboetal91}.   Its main components can be summarized as follows:
\begin{enumerate}[(i)]
\item{\bf Gaussian perturbations:} For a random variable~$X$ on $\mathbb{R}^n$ define 
\begin{align}
X_t=X+\sqrt{t}Z\qquad t\geq 0\mylabel{eq:xtevolveddefinition}
\end{align}
where $Z$ be a standard normal with variance $\sigma^2=1$.
 It is well-known that the family of distributions $f_t=f_{X_t}$ satisfies the heat or diffusion equation
\begin{align}
\frac{\partial}{\partial t} f_t &=\Delta f_t\ ,\qquad \Delta=\sum_{j=1}^N \frac{\partial^2}{\partial x^2_j}\ \mylabel{eq:heatequation}
\end{align}
with initial condition~$f_0(x)=f_X(x)$~for all $x\in\mathbb{R}^N$.   In particular, diffusion acts as a one-parameter semigroup on the set of probability density functions, i.e., 
\begin{align}
X_{t_1+t_2}=(X_{t_1})_{t_2}\qquad\textrm{ for all }t_1,t_2\geq 0\ .\label{eq:semigrouppropertyt1t2}
\end{align}

Time evolution under~\eqref{eq:heatequation} smoothes out initial spatial inhomogenities. In the limit~$t\rightarrow \infty$, the distribution of~$X_t$ becomes `simpler': it approaches a Gaussian, and the asymptotic scaling of its entropy~$H(X_t)\sim g(t)$ is  independent of the initial distribution~$f_X$. This is crucial in the  proof of the EPI.

A second essential feature of the evolution~\eqref{eq:heatequation} is that it commutes with convolution: we have (for any $0<\lambda<1$ and $t>0$)
\begin{align}
X_t\boxplus_\lambda Y_t\equiv (X\boxplus_\lambda Y)_t \mylabel{eq:diffusingoutputequivdiffusinginput}
\end{align}
for the time-evolved versions $(X_t,Y_t)$ of two independent random variables $(X,Y)$.

\item{\bf de Bruijin's identity: } A third property of the diffusion process~\eqref{eq:xtevolveddefinition} concerns the rate of entropy increase for infinitesimally small times, $t\rightarrow 0$. de Bruijin's identity relates this rate to a well-known estimation-theoretic quantity. It states that
\begin{align}
\begin{matrix}
\frac{d}{dt}\Big|_{t=0} H(X+\sqrt{t}Z) =\frac{1}{2}J(X)\qquad \textrm{ where }\\
\qquad \qquad J(X)=\int (\nabla f(x))^T(\nabla f(x))\cdot \frac{1}{f(x)}d^Nx\ .\ 
\end{matrix}\mylabel{eq:debruijiclassicalstated}
\end{align}
The quantity~$J(X)$ is the {\em Fisher information} $J(X)=J(f^{(\theta)};\theta)|_{\theta=0}$ associated with the family~$\{f^{(\theta)}\}_{\theta\in\mathbb{R}}$ of distributions
\begin{align}
f^{(\theta)}(x)&=f(x-\theta\cdot \vec e)\qquad \textrm{ for all }x\in\mathbb{R}^n\mylabel{eq:translationsbytheta}\ .
\end{align}
where $\vec{e}=(1,\ldots,1)$. 
In other words, the random variables $\{X^{(\theta)}\}_{\theta}$ are obtained by translating~$X$ by an unknown amount~$\theta$ in each direction, and~$J(X)$ relates to the problem of estimating~$\theta$ (by the Cram\'er-Rao bound). More importantly, the quantity~$J(X)$ inherits various general properties of Fisher information, generally defined as
\begin{align} 
J(f^{(\theta)};\theta)\big|_{\theta=\theta_0}&=\int \left(\frac{\partial}{\partial \theta}\log f^{(\theta)}(x)\right)^2 f^{(\theta)}(x) dx\Big|_{\theta=\theta_0} \mylabel{eq:fisherinformationgeneral}
\end{align}
for a parameterized family $\{f^{(\theta)}\}$ of probability distributions on~$\mathbb{R}$ (this is replaced by a matrix in the case $n>1$).

The link between entropy and Fisher information 
expressed by de Bruijin's identity is appealing in itself. Its usefulness in the proof of the EPI stems from yet another compatibility property with convolution. Translations as defined by~\eqref{eq:translationsbytheta} can be equivalently applied before or after applying addition (convolution), that is, we have 
\begin{align}
(X\boxplus_{\lambda} Y)^{(\theta)}&=X^{(\sqrt{\lambda}\theta)}\boxplus_{\lambda} Y^{(\sqrt{1-\lambda}\theta)}\label{eq:additiontranslationconv}
\end{align}
for $0<\lambda<1$ and $\theta\in\mathbb{R}$. 
In other words, adding (convolving) $X$ and $Y$ and then translating by~$\theta$ is equivalent to adding up translated versions $X^{(\sqrt{\lambda}\theta)},Y^{(\sqrt{1-\lambda}\theta)}$. 

\item{\bf Convexity inequality for Fisher information}: This shows that Fisher information behaves in a `convex' manner under the addition law~\eqref{eq:additionlawclassicalcov}, i.e., 
\begin{align}
 J(X\boxplus_{\lambda}Y)\leq \lambda J(X)+(1-\lambda) J(Y) \ .\mylabel{eq:fisherinformationinequalitycomplete}
\end{align}
A related inequality for the addition law when $\lambda=1/2$ is known as Stam's inequality; it takes the form
\begin{align}
\frac{2}{J(X\boxplus_{1/2}Y)}\geq \frac{1}{J(X)}+\frac{1}{J(Y)}\ .\mylabel{eq:staminequalityclassical}
\end{align}
Proofs of such inequalities have been given in~\cite{Stam59,Blachman65,CostaThomas84}. An insightful proof was provided by Zamir~\cite{Zamir98}, relying only on  basic properties of Fisher information. The most fundamental of those is the information-processing inequality, which states that  Fisher information is non-increasing under the application of a channel (stochastic map)~$\cE$, 
\begin{align} 
J(\cE(f^{(\theta)});\theta)\big|_{\theta=\theta_0}\leq J(f^{(\theta)};\theta)\big|_{\theta=\theta_0}\ .\mylabel{eq:dataprocessingfisherinfo}
\end{align}
The Fisher information inequalities~\eqref{eq:fisherinformationinequalitycomplete} and~\eqref{eq:staminequalityclassical} are  simple consequences of this data-processing inequality and the compatibility~\eqref{eq:additiontranslationconv} of  convolution with translations. For example, the quantities on the two sides of~\eqref{eq:fisherinformationinequalitycomplete} are the Fisher informations of the two families of distributions~$\{X^{(\theta)}\}_{\theta}$ and~$\{(X^{(\sqrt{\lambda}\theta)},Y^{(\sqrt{1-\lambda}\theta)})\}_\theta$, and are directly associated with both sides of the compatiblity identity~\eqref{eq:additiontranslationconv}.
\end{enumerate}
The entropy inequalities~\eqref{eq:epiclassicalcov} and~\eqref{eq:liebversion}   follow immediately from~(i)--(iii). Concretely, let us describe the argument of Blachman~\cite{Blachman65} adapted to the proof of~\eqref{eq:liebversion}
(The derivation of~\eqref{eq:epiclassicalcov} is identical in spirit, but slightly more involved. It uses~\eqref{eq:staminequalityclassical} instead of~\eqref{eq:fisherinformationinequalitycomplete}.). Note that a similar proof which avoids the use of asymptotically divergent quantities is provided in~\cite{Demboetal91}, yet we find that the following version is more amenable to a quantum generalization.

The core idea is to consider  perturbed or diffused versions
\begin{align*}
X_t &=X+\sqrt{t}Z_1\\
Y_t&=Y+\sqrt{t}Z_2\ ,
\end{align*}
of the orignal random variables $(X,Y)$, where $t\geq 0$ and where $Z_1,Z_2$ are independent  random variables with standard normal distribution. This  is useful because asymptotically as~$t\rightarrow\infty$, the entropy power inequality is satisfied trivially for the pair~$(X_t,Y_t)$ by~$(i)$. We call this the infinite-diffusion limit. de Bruijin's identity~$(ii)$ and the Fisher information inequality~$(iii)$ provide an understanding of the involved entropies in the opposite limit, where the diffusion only acts for infinitesimally short times. This allows one to extend the result from infinitely large times down to all values of~$t$ including~$0$. In the latter case, there is no diffusion and we are dealing with the original random variables $(X,Y)$.

\begin{proof}[Sketch proof of~\eqref{eq:liebversion}]
In more detail, introduce the quantity
\begin{align}
\delta(t):= H(X_t\boxplus_\lambda Y_t)-\lambda H(X_t)-(1-\lambda)H(Y_t)
\end{align}
measuring the  amount of potential violation  of the entropy power inequality for the pair~$(X_t,Y_t)$. The entropy power inequality~\eqref{eq:liebversion} for~$(X_t,Y_t)$ is equivalent to the statement~$\delta(t)\geq 0$. Since $(X_0,Y_0)\equiv (X,Y)$, the entropy power inequality~\eqref{eq:liebversion} amounts to the inequality
\begin{align}
\delta(0)\geq 0\ .\mylabel{eq:epiatzerodef}
\end{align}

To show~\eqref{eq:epiatzerodef}, observe that in the infinite-diffusion limit~$t\rightarrow\infty$, we have
\begin{align}
\lim_{t\rightarrow\infty}\delta(t)&=0\mylabel{eq:infinitediffusezero}
\end{align}
as all entropies $H(X_t)\sim H(Y_t)\sim H(X_t\boxplus_\lambda Y_t)$ have the same asymptotic scaling. Here we used the compatibility~\eqref{eq:diffusingoutputequivdiffusinginput} of 
diffusion and the addition rule~\eqref{eq:additionlawclassicalcov}.
 de Bruijin's identity~$(i)$ and the convexity inequality for Fisher information~$(ii)$ imply that 
the function $\delta(t)$ is non-increasing, i.e, 
\begin{align}
\dot{\delta}(t)\leq 0\qquad\textrm{ for all } t\geq 0\ . \mylabel{eq:deltaderivnonneg}
\end{align}
Indeed, we can use the  semigroup property~\eqref{eq:semigrouppropertyt1t2}
 of diffusion to show that it is sufficient to consider infinitesimal perturbations, i.e., the derivative of $\delta(t)$ at~$t=0$. But the condition
\begin{align*}
\dot{\delta}(0)\leq 0 
\end{align*}
is simply the Fisher information inequality~\eqref{eq:fisherinformationinequalitycomplete}, thanks to de Bruijin's identity~\eqref{eq:debruijiclassicalstated} and the compatibility~\eqref{eq:diffusingoutputequivdiffusinginput} of diffusion with convolution.
Hence one obtains~\eqref{eq:deltaderivnonneg} which together with Eq.~\eqref{eq:infinitediffusezero} implies the claim~\eqref{eq:epiatzerodef}. 
\end{proof}

\subsection{The quantum entropy power inequality: proof elements\mylabel{sec:qpowerinequalitysketch}}
Here we argue that a careful choice of
corresponding quantum-mechanical concepts guarantees that the remarkable compatibility properties between convolution, diffusion, translation, entropy and Fisher information all have analogous quantum formulations. This yields a proof of the quantum entropy power inequalities~\eqref{eq:epiquantumcov} and~\eqref{eq:liebversionquantum} that follows exactly the structure of the classical proof, yet involves fundamentally quantum-mechanical objects. 

In this section, we briefly motivate and discuss the choices required. 

\begin{enumerate}[(i)]
\item{\bf A quantum diffusion process:} For a state $\rho$ of $n$~modes, we define 
\begin{align}
\rho_t  &=e^{t\cL}(\rho)\qquad t\geq 0\ \label{eq:rhotdefinitionbasic}
\end{align}
as the result of evolving for time~$t$ under the one-parameter semigroup $\{e^{t\cL}\}_{t\geq 0}$ generated by the Markovian Master equation
\begin{align}
\begin{matrix}
\frac{d}{d t}\rho_t =\cL(\rho_t) \\
\cL(\cdot) =- \frac{1}{4}\sum_{j=1}^n \left([Q_j,[Q_j,\cdot]]+[P_j,[P_j,\cdot]]\right)\ .
\end{matrix}\label{eq:markovmaster}
\end{align}
This choice of Liouvillean~$\cL$ is motivated (as in~\cite{Hall00}) by the fact that it is the natural quantum counterpart of the (classical) Laplacian (cf.~\eqref{eq:heatequation})
\begin{align*}
\Delta(\cdot)=\frac{1}{4}\sum_{j=1}^n \left(\{q_j,\{q_j,\cdot\}\}+\{p_j,\{p_j,\cdot\}\}\right)
\end{align*} acting on probability density function on the  symplectic phase space~$\mathbb{R}^{2n}$ of  $n$~classical particles (the factor~$\frac{1}{4}$ is introduced here for mathematical convenience). Indeed, $\cL$ is obtained from the Dirac correspondence $\{\cdot,\cdot\}\mapsto \frac{1}{i\hbar}[\cdot,\cdot]$ between the Poisson bracket and the commutator, as well as phase space variables and canonical operators, $(q_j,p_j)\mapsto (Q_j,P_j)$. 

 Alternatively, one can argue that the Wigner functions~$\{W_t\}$ of the states~$\{\rho_t\}$ defined by the evolution~\eqref{eq:rhotdefinitionbasic} obey the heat equation~\eqref{eq:heatequation} on~$\mathbb{R}^{2n}$. 

Hall~\cite{Hall00} has previously examined the relation between the Master equation~\eqref{eq:markovmaster} and classical Fisher information. He argued that the position- and momentum density functions~$f_t(x)=\bra{x}\rho_t\ket{x}$ and $f_t(p)=\bra{p}\rho_t\ket{p}$ obey the heat equation~\eqref{eq:heatequation}. Correspondingly, he obtained de Bruijin-type identities for the measurement-entropies when measuring~$\rho_t$ in the (overcomplete) position- and momentum eigenbases, respectively. These results are, however, insufficient for our purposes as they refer to purely classical objects (derived from quantum states).

\item{\bf Divergence-based quantum Fisher information:}
The information processing inequality~\eqref{eq:dataprocessingfisherinfo} is well-known to be  the defining property of classical Fisher information: A theorem by \v{C}encov~\cite{Cencov81,Campbell86} singles the Fisher information metric as the unique monotone metric (under stochastic maps) on the manifold of (parametrized) probability density functions. Unfortunately, such a characterization does not hold in the quantum setting: Petz~\cite{Petz96} established a one-to-one correspondence  between monotone  metrics for quantum states and the set of matrix-monotone functions. 

It turns out that the correct choice of quantum Fisher information in this context is given by the second derivative of the relative entropy or divergence $S(\rho\|\sigma)=\tr(\rho\log\rho-\rho\log\sigma)$, that is, 
\begin{align}
J(\rho^{(\theta)};\theta)\Big|_{\theta=\theta_0}&=\frac{d^2}{d\theta^2} S(\rho^{(\theta_0)}\|\rho^{(\theta)})\Big|_{\theta=\theta_0}\ .\label{eq:qfishinfo}
\end{align}
This choice is guided by the fact that 
the classical Fisher information~\eqref{eq:fisherinformationgeneral} can also be written in this form, that is,
$J(f^{(\theta)};\theta)\big|_{\theta=\theta_0}=\frac{d^2}{d\theta^2} D(f^{(\theta_0)}\|f^{(\theta)})\Big|_{\theta=\theta_0}$,
where $D(f\|g)=\int f(x)\log\frac{f(x)}{g(x)}$ is the classical relative entropy.  We call the quantity~\eqref{eq:qfishinfo} the {\em divergence-based (quantum) Fisher information}. 

Lesniewski and Ruskai~\cite{LesRus99} have studied metrics obtained by differentiation of a `contrast functional'~$S(\cdot\|\cdot)$ in a similar manner as~\eqref{eq:qfishinfo}. They found that every monotone Fisher information arises from a quasi-entropy in this way. In particular, the data-processing inequality~\eqref{eq:dataprocessingfisherinfo} is a consequence of their general result. We will discuss  an independent elementary proof of the data processing inequality for the divergence-based Fisher information in Section~\ref{sec:divergencebasedqfisher}.

\item{\bf Translations of quantum states:}  These are simply translations in phase space. We use translated versions of a state~$\rho$  displaced in each direction, 
\begin{align*}
\rho^{(\theta,Q_j)}&=e^{i\theta P_j}\rho e^{-i\theta P_j}\\
\rho^{(\theta,P_j)}&=e^{-i\theta Q_j}\rho e^{i\theta Q_j}\qquad\textrm{ for }j=1,\ldots,n\ . 
\end{align*}
Clearly, this choice is dictated by the Heisenberg action on the mode operators,
\begin{align*}
e^{i\theta P_j}Q_je^{-i\theta P_j}&=Q_j+\theta\\
e^{-i\theta Q_j}P_je^{i\theta Q_j}&=P_j+\theta\ , 
\end{align*}
in analogy to the addition law~\eqref{eq:XYlambdaaddquantum}.

\end{enumerate}
In the remainder of this paper, we verify that  these definitions
indeed satisfy the necessary properties to provide a proof of the
quantum inequalities~\eqref{eq:epiquantumcov}
and~\eqref{eq:liebversionquantum}.
A word of mathematical caution is in order here:  several times we
will need to interchange a derivative with an integral or infinite
sum,
typically taking a derivative inside a trace.  This can be justified
as long as the functions involved are sufficiently smooth.
In order to focus on the main ideas in our argument, rather
than pursue a detailed justification of the interchange of limits we
will simply restrict our attention to families of states
that are sufficiently smooth.  This involves little loss of generality
since we are interested in the entropy of states satisfying an energy
bound.
On this set of states, entropy is continuous, so one can hope to
approximate non-smooth functions with smooth ones to obtain the
unrestricted result\footnote{Alternatively, one could introduce a
high-photon-number cutoff (well higher than the energy bound
of the states under consideration) to make the state space finite,
where these problems disappear, and take the limit as the cutoff
grows.}.
As a result, smoothness requirements are not usually considered in
proofs of the classical EPI~\cite{Stam59,Blachman65,liuvis05,VerduGuo06}
or studies of related quantum ideas~\cite{Giovannettietal04,Giovannetti10}. Developing a more mathematically thorough version of these arguments  appears to be a formidable task (even in the classical case, see~\cite{Barron86central}) which is left to future work.

\section{Basic definitions}
\subsection{Continuous-variable quantum information}
A quantum state of $n$~modes has $2n$~canonical degrees of freedom. Let  $Q_k$ and $P_k$ be the ``position'' and ``momentum'' operators of the $k$-th mode, for $k=1,\ldots,n$, acting on the Hilbert space~$\cH_n$ associated with the system.  Defining $R=(Q_1,P_1,\ldots,Q_n,P_n)$, these operators obey the commutation relations
\begin{align*}
[R_k,R_\ell]=iJ_{k\ell}\qquad\textrm{ where }\qquad J=\left(\begin{matrix}
0 & 1\\
-1 & 0
\end{matrix}\right)^{\oplus n}\ ,
\end{align*}
with $^{\oplus n}$ denoting the $n$-fold direct sum. 
The {\em Weyl displacement operators} are defined by
\begin{align*}
D(\xi)&=e^{i\xi\cdot J R}\qquad\textrm{ for }\xi\in\mathbb{R}^{2n}\ .
\end{align*}
These are unitary operators satisfying the relations
\begin{align*}
D(\xi)D(\eta)&=e^{-\frac{i}{2}\xi\cdot J\eta}D(\xi+\eta)\ 
\end{align*}
and 
\begin{align}
D(\xi)R_kD(\xi)^\dagger&=R_k+\xi_k I\ \qquad\textrm{ or } \mylabel{eq:displacementpropone}\\
[D(\xi),R_k]&=\xi_k D(\xi)\ .\mylabel{eq:displacementproperty}
\end{align}
This explains the terminology. Any state~$\rho$ takes the form
\begin{align}
\rho&=(2\pi)^{-n}\int \chi_\rho(\xi)D(-\xi)d^{2n}\xi\ ,\mylabel{eq:rhocharacteristicfunction}
\end{align}
where
\begin{align}
\chi_\rho(\xi)=\tr(D(\xi)\rho)\mylabel{eq:characteristicfunctiondef}
\end{align} 
is the {\em  characteristic function} of~$\rho$. The characteristic function~$\chi_\rho$ is well-defined for any trace-class operator, but the map~$\rho\mapsto \chi_\rho$ can be extended to all Hilbert-Schmidt class operators~$\rho$ by continuity. The map  is an isometry between the Hilbert space of these operators and the space~$L^2(\mathbb{R}^{2n})$ of square-integrable functions on the phase space:
\begin{align}
\tr(\rho^\dagger \sigma)&=(2\pi)^{-n}\int \overline{\chi_\rho(\xi)}\chi_\sigma(\xi)\qquad\textrm{ for all }\rho,\sigma\ .\label{eq:overlapcomputationcharacteristic}
\end{align}
Observe that for Hermitian $\rho=\rho^\dagger$, we have~$\chi_\rho(\xi)=\overline{\chi_\rho(-\xi)}$ since~$D(\xi)^\dagger=D(-\xi)$

A {\em Gaussian state} $\rho$ is uniquely characterized by the vector $d=(d_1,\ldots,d_{2n})\in\mathbb{R}^{2n}$ of first moments 
\begin{align}
d_k&=\tr(\rho R_k)\mylabel{eq:firstmomentvector}
\end{align} and its covariance matrix~$\gamma$ with entries
\begin{align}
\gamma_{k\ell}&=\tr\left(\rho\{R_k-d_k,R_\ell-d_\ell\}\right)\ ,\mylabel{eq:cmdef}
\end{align}
where $\{A,B\}=AB+BA$. 
Its characteristic function is 
\begin{align}
\chi(\xi)&=\exp\left(i\xi\cdot J d-\xi\cdot J^T\gamma J\xi/4\right)\ .\mylabel{eq:characteristicfunctiongaussian}
\end{align}
The covariance matrix~$\gamma$ of a quantum state satisfies the operator inequality
\begin{align}
\gamma \geq iJ\ ,\mylabel{eq:heisenberguncertaintyprinciple}
\end{align}
which is Heisenberg's uncertainty relation. Conversely, any matrix~$\gamma$ satisfying~\eqref{eq:heisenberguncertaintyprinciple} and vector $d\in\mathbb{R}^{2n}$ define a Gaussian quantum state via~\eqref{eq:characteristicfunctiongaussian}. Complex conjugating~\eqref{eq:heisenberguncertaintyprinciple} and adding it to itself shows that the covariance matrix~$\gamma\geq 0$ is positive definite. According to Williamson's theorem~\cite{Williamson36}, 
there is a symplectic matrix~$S\in Sp(2n,\mathbb{R})$, i.e., a matrix satisfying~$SJS^T=J$, such that
\begin{align}
S\gamma S^T&=\diag(\nu_1,\nu_1,\nu_2,\nu_2,\ldots,\nu_n,\nu_n)\ \textrm{ with }\nu_j\geq 0 \mylabel{eq:williamsonform}
\end{align}
is diagonal. Eq.~\eqref{eq:williamsonform} is called the {\em Williamson normal form} of~$\gamma$, and~$\{\nu_j\}_j$ are referred to as the {\em Williamson eigenvalues}.

Symplectic transformations~$S$ are in one-to-one correspondence with linear transformations $R\mapsto R':=SR$ preserving the canonical commutation relations. Each $S\in Sp(2n,\mathbb{R})$  defines a unitary~$U_S$ on~$\cH_n$ which realizes this map by conjugation, i.e., 
\begin{align}
U_S^\dagger R_kU_S &=\sum_{\ell=1}^{2n}S_{k\ell} R_{\ell} =:R'_k\ .\mylabel{eq:unitarymapweyl}
\end{align}
The unitary $U_S$ is unique up to a global phase.

If $S\in Sp(2n,\mathbb{R})$ brings the covariance matrix of a Gaussian state~$\rho$ into  Williamson normal form as in~\eqref{eq:williamsonform}, then~$\{R'_k\}_k$ are the {\em eigenmodes} of~$\rho$.
In terms of the creation, annihilation and number operators
\begin{align*}
a^\dagger_k=\frac{1}{\sqrt{2}}(Q_k'+iP_k')\qquad a_k=\frac{1}{\sqrt{2}}(Q_k'-iP_k')\\
n_k=a_k^\dagger a_k\ ,\qquad\textrm{ for }k=1,\ldots,n\ ,
\end{align*}
the state takes the form
\begin{align}
U_S\rho U_S^\dagger &=\bigotimes_{k=1}^n\frac{e^{-\beta_k n_k}}{\tr e^{-\beta_k n_k}}\ .\mylabel{eq:rhodiagonalized}
\end{align}

In this expression, the inverse temperatures $\beta_k$ are defined in terms of the symplectic eigenvalues as
\begin{align}
(e^{\beta_k}-1)^{-1}&=(\nu_k-1)/2:=N(\nu_k)\ \textrm{ for }k=1,\ldots, n\ .\mylabel{eq:meanphotonnumber}
\end{align}
The quantity~\eqref{eq:meanphotonnumber} is the {\em mean photon number}, i.e., the expectation value~$\tr(U_S\rho U_S^\dagger a_k^\dagger a_k)$. Since the number operators have the spectral decomposition~$n_k=\sum_{n\geq 0} n\proj{n}$ with integer eigenvalues, the entropy $S(\rho)=-\tr(\rho\log\rho)$  of a Gaussian state~$\rho$ can be evaluated from~\eqref{eq:rhodiagonalized} and~\eqref{eq:meanphotonnumber} as
\begin{align}
S(\rho)&=\sum_{k=1}^n g(N(\nu_k)) \mylabel{eq:entropygaussianexpr}
\end{align}
where $g(N):=(N+1)\log(N+1)-N\log N$. 
Note that this quantity does not depend on the displacement vector~$d$.

A completely positive trace-preserving map (CPTPM)~$\cE$ on $\cH$ is called {\em Gaussian} if it preserves the set of Gaussian states, that is, the state~$\rho':=\cE(\rho)$ is Gaussian for all Gaussian input states~$\rho$. By definition, the set of Gaussian operations is closed under composition. A Gaussian operation~$\cE$ is completely specified by its (Heisenberg) action on mode operators. The action of $\cE$ on Gaussian states is determined by  a triple $(X,Y,\xi)$, where $\xi\in\mathbb{R}^n$ is an arbitrary vector and $X,Y$ are real $2n\times 2n$~matrices satisfying $Y^T=Y$ and $Y+i(J-XJX^T)\geq 0$. For a Gaussian state~$\rho$ with covariance matrix~$\gamma$ and displacement vector~$d$, the Gaussian state~$\cE(\rho)$ is described by
\begin{align}
\gamma&\mapsto \gamma':=X\gamma X^T+Y\mylabel{eq:covariancematrixmap}\\
d&\mapsto d':=Xd+\xi\ . \mylabel{eq:displacementmap}
\end{align}
 More generally, the action on a general state $\rho$ is 
determined by the action
\begin{align}
\chi_\rho(\xi)&\mapsto \chi_{\cE(\rho)}(\xi)=\chi_\rho(X\xi)e^{-\frac{1}{4}\xi\cdot Y\xi}\ .\mylabel{eq:chxitransformationrule}
\end{align}
on characteristic functions.  In the special case where~$\cE(\rho)=U\rho U^\dagger$ is unitary, we have~$Y=0$ and $X=:S\in Sp(2n,\mathbb{R})$. We call such a unitary~$U$ Gaussian and sometimes write~$U=U_S$ to indicate it is defined by~$S$ (cf.~\eqref{eq:unitarymapweyl}).

A general Gaussian operation~$\cE$ on~$\cH_n$ has a Stinespring dilation with a Gaussian unitary~$U_S$, $S\in Sp(2(n+m),\mathbb{R})$ acting on the Hilbert space~$\cH_n\otimes\cH_m$ of $n+m$~modes (for some~$m$) and a Gaussian ancillary state~$\rho_B$ on~$\cH_m$, i.e., it is of the form
\begin{align}
\cE(\rho)&=\tr_B(U_S(\rho\otimes \rho_B)U_S^\dagger)\ .\mylabel{eq:stinespringdilation}
\end{align} 
Since the operations of taking the tensor product and the partial trace correspond to taking the direct sum or a submatrix on the level of covariance matrices, identities~\eqref{eq:covariancematrixmap} and~\eqref{eq:stinespringdilation} translate into
\begin{align*}
\left[S(\gamma\otimes \gamma_B)S^T\right]_{2n}&=X\gamma X^T+Y\ ,
\end{align*}
where $[\cdot]_{2n}$ denotes the leading principle $2n\times 2n$~submatrix and~$\gamma_B$ is the covariance matrix of~$\rho_B$. It is again convenient to express this in terms of characteristic functions. First consider the partial trace:
if $\rho_{n+m}$ has characteristic function $\chi_{\rho_{n+m}}(\xi,\xi')$, where $\xi\in\mathbb{R}^{2n}$, $\xi'\in\mathbb{R}^{2m}$, then the partial trace~$\rho_n=\tr_m\rho_{n+m}$ has characteristic function
\begin{align*}
\chi_{\rho_n}(\xi)&=\chi_{\rho_{n+m}}(\xi,0^{2m})\ .
\end{align*}
With~\eqref{eq:chxitransformationrule}, we get the transformation rule 
\begin{align}
\chi_\rho\mapsto \chi_{\cE(\rho)}(\xi):=(\chi\otimes\chi_B)(S(\xi,0^{2m}))e^{-\frac{1}{4}(\xi,0^{2m})\cdot Y(\xi,0^{2m})}\ ,\mylabel{eq:composedmapcharacteristic}
\end{align}
where $\chi_B$ is the  characteristic function of the Gaussian ancillary state~$\rho_B$.
Here $(\chi\otimes\chi_B)(\xi,\xi'):=\chi(\xi)\cdot\chi_B(\xi')$ is the characteristic function of the product state~$\rho\otimes\rho_B$.

\subsection{Quantum addition using the beamsplitter: definition\mylabel{sec:beamsplitterdef}}
In this section, we specify the quantum addition operation~\eqref{eq:additionlawquantumcov} in more detail. It takes takes two $n$-mode states~$\rho_X,\rho_Y$ and outputs an $n$-mode state $\rho_{X\boxplus_\lambda Y}=\cE_\lambda(\rho_X\otimes\rho_Y)$, where $0<\lambda<1$.

To define the CPTP map $\cE_\lambda$, consider the transmissivity $\lambda$-beam splitter. This is a Gaussian unitary~$U_{\lambda,n}$, whose action on $2n$~modes is defined by the symplectic matrix
\begin{align}
S_{\lambda,n}&=\left(
\begin{matrix}
\sqrt{\lambda}I_n & \sqrt{1-\lambda}I_n\\
\sqrt{1-\lambda}I_n & -\sqrt{\lambda}I_n
\end{matrix}
\right)\otimes I_2\ ,\label{eq:symplectictransfmatrix}
\end{align}
where we use a tensor product $\{e_1^X,\ldots,e_n^X,e_1^Y,\ldots,e_1^Y\}\otimes \{q,p\}$
to represent the two sets of modes~$(q^X_1,p^X_1,\ldots,q^X_n,p^X_n)$ and $(q^Y_1,p^Y_1,\ldots,q^Y_n,p^Y_n)$. Note that the symplectic matrix takes the form $I_n\otimes J$, with
\begin{align*}
J&=\left(
\begin{matrix}
0 & 1\\
-1 & 0
\end{matrix}
\right)\ .
\end{align*}
With respect to the $n$ pairs~$(q_j^X,p_j^X),(q_j^Y,q_j^Y)$ of modes, for $j=1,\ldots,n$, we have $S_{\lambda,n}=S_{\lambda,1}^{\oplus n}$, hence $U_{\lambda,n}=U_{\lambda,1}^{\otimes n}$ corresponds to  independently applying a beam-splitter to each pair of modes. In the following, we will omit the subscript~$n$, and simply write $\xi_X=(\xi_X^{Q_1},\xi_X^{P_1},\ldots,\xi_X^{Q_n},\xi_X^{P_n})$ and $\xi_Y=(\xi_Y^{Q_1},\xi_Y^{P_1},\ldots,\xi_Y^{Q_n},\xi_Y^{P_n})$ for the collection of phase space variables associated with first and second set of $n$~modes. 

The map $\cE_\lambda$ is defined by conjugation with $U_\lambda$ and tracing out the second set of modes, i.e., it is
\begin{align}
\cE_\lambda(\rho_{XY}) &=\tr_{Y} U_\lambda \rho_{XY} U_\lambda^\dagger\ .\mylabel{eq:elambdamapdefinition}
\end{align}
Clearly, this is a Gaussian map. A bipartite Gaussian state~$\rho_{XY}$ with covariance matrix and displacement vector
\begin{align}
\gamma&=
\left(\begin{matrix}
\gamma_{X} &\gamma_{XY}\\
\gamma_{YX} &\gamma_{Y}\\
\end{matrix}\right)\ ,\qquad d=(d_X,d_Y)\label{eq:bipartitegaussianstatedescr}
\end{align}
gets mapped into a Gaussian state $\cE_\lambda (\rho_{XY})$ with
\begin{align}
\begin{matrix}
\gamma'&=&\lambda \gamma_X+(1-\lambda)\gamma_Y+\sqrt{\lambda(1-\lambda)}(\gamma_{XY}+\gamma_{YX})\\
d'&=&\sqrt{\lambda}d_X+\sqrt{1-\lambda}d_Y
\end{matrix}\label{eq:covariancematrixadditionbeam}
\end{align}
This completely determines the action of $\cE_\lambda$ on Gaussian inputs, but we will also be interested in more
general inputs of product form. To get an explicit expression, consider two states $\rho_X,\rho_Y$ with characteristic functions~$\chi_X,\chi_Y$. According to~\eqref{eq:chxitransformationrule} and~\eqref{eq:symplectictransfmatrix}, the state~$U_\lambda(\rho_X\otimes\rho_Y)U^\dagger$  has characteristic function
\begin{align}
\chi_{U_\lambda(\rho_X\otimes\rho_Y)U^\dagger}(\xi_X,\xi_Y)=\chi_X(\sqrt{\lambda}\xi_X+\sqrt{1-\lambda}\xi_Y)\nonumber\\
\qquad\qquad\cdot\chi_Y(\sqrt{1-\lambda}\xi_X-\sqrt{\lambda}\xi_Y)\ .\mylabel{eq:characteristiculambda}
\end{align}
It follows that $\cE_\lambda(\rho\otimes\rho)$ has characteristic function
\begin{align}
\chi_{\cE_\lambda(\rho_X\otimes\rho_Y)}(\xi)&=\chi_X(\sqrt{\lambda}\xi)\cdot\chi_Y(\sqrt{1-\lambda}\xi)\ .
\mylabel{eq:characteristicfunctionatoutput}
\end{align}

\section{A quantum diffusion equation\label{sec:basicpropertiesliouville}}
Consider the diffusion Liouvillean
\begin{align}
\cL(\rho)&=- \frac{1}{4}\sum_{j=1}^n \left([Q_j,[Q_j,\cdot]]+[P_j,[P_j,\cdot]]\right)\nonumber\\
&=-\frac{1}{4}\sum_{j=1}^{2n}[R_j,[R_j,\rho]]\ .\label{eq:diffliou}
\end{align} 
defined on $n$ modes~(cf.~\eqref{eq:markovmaster}).   
We first establish the relevant properties of the one-parameter semigroup $\{e^{t\cL}\}_{t\geq 0}$, where the CPTP map~$e^{t\cL}$ describes evolution under the Markovian master equation
\begin{align} 
\frac{d}{dt}\rho(t)&=\cL(\rho(t))\ .\label{eq:markovianmastereq}
\end{align}
for time~$t$. We call this the diffusion semigroup. We will subsequently show that the maps $e^{t\cL}$ are compatible with beamsplitters~(Section~\ref{sec:beamsplittercompatibility}), and analyze the asymptotic scaling~$S(\rho(t))$ of  solutions $\rho(t)=e^{t\cL}(\rho)$ of~\eqref{eq:markovianmastereq} (Section~\ref{sec:scalingasymptotic}).

The following is well-known~(see e.g.,~\cite{Vanheuverzwijn1978} and~\cite{lloydetal09}, where~$\cL$ is given in terms of creation- and annihilation operators). 
\begin{lemma}\mylabel{lem:solution}
Let $\cL$ be the Liouvillean~\eqref{eq:diffliou}. Then
\begin{enumerate}[(i)]
\item
The Liouvillean~$\cL$ is Hermitian with respect to the Hilbert-Schmidt inner product, i.e.,
\begin{align}
\tr(\rho\cL(\sigma))=\tr(\cL(\rho)\sigma)\label{eq:Lhermitian}
\end{align}
for all states $\rho,\sigma$. 
\item
Let $\rho_0$ be a state  with characteristic function~$\chi_{\rho_0}(\xi)$, and let $\rho(t)=e^{t\cL}(\rho_0)$ denote the solution of the  Master equation~\eqref{eq:markovianmastereq} with initial value $\rho(0)=\rho_0$.
Then~$e^{t\cL}(\rho_0)$ has characteristic function
\begin{align}
\chi^{(t)}(\xi)&=\chi_{\rho_0}(\xi)\exp\left(-\|\xi\|^2t/4\right)\ ,\mylabel{eq:tentativechit}
\end{align}
where $\|\xi\|^2=\sum_{j=1}^{2n}\xi_j^2$.  
\item
For any $t\geq 0$, the CPTPM $e^{t\cL}$ is a Gaussian map acting on covariance matrices and displacement vectors by
\begin{align}
\begin{matrix}
\gamma&\mapsto \gamma'&=&\gamma+tI_{2n}\\
d&\mapsto d'&=&d\ .
\end{matrix}\label{eq:gaussianmapdescriptionsemigroup}
\end{align}
\end{enumerate}
\end{lemma}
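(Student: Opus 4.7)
The plan is to prove the three parts in the order they are listed, since (i) provides the tool used in (ii), and (ii) immediately implies (iii).

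For part~(i), I would use the cyclicity of the trace together with the fact that each $R_j$ is (essentially) self-adjoint, so that double commutators with $R_j$ are self-adjoint with respect to the Hilbert-Schmidt inner product. Concretely, for any self-adjoint $A$ and trace-class $\rho,\sigma$ one computes
\begin{align*}
\tr\bigl(\rho\,[A,[A,\sigma]]\bigr)
&=\tr\bigl(\rho A[A,\sigma]\bigr)-\tr\bigl(\rho[A,\sigma]A\bigr)\\
&=\tr\bigl((\rho A-A\rho)[A,\sigma]\bigr)=-\tr\bigl([A,\rho][A,\sigma]\bigr),
\end{align*}
and by the same manipulation applied to $\tr\bigl([A,[A,\rho]]\sigma\bigr)$ one obtains the same expression $-\tr([A,\rho][A,\sigma])$. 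Summing over $j=1,\ldots,2n$ and inserting the factor $-1/4$ yields~\eqref{eq:Lhermitian}.

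For part~(ii), the cleanest route is to differentiate the characteristic function directly, using~(i) to move $\cL$ onto the displacement operator. Writing $\chi^{(t)}(\xi)=\tr\bigl(D(\xi)\,\rho(t)\bigr)$ and using the Master equation together with~(i),
\begin{align*}
\frac{d}{dt}\chi^{(t)}(\xi)=\tr\bigl(D(\xi)\cL(\rho(t))\bigr)=\tr\bigl(\cL(D(\xi))\rho(t)\bigr).
\end{align*}
The commutation relation~\eqref{eq:displacementproperty} gives $[R_k,D(\xi)]=-\xi_k D(\xi)$, hence $[R_k,[R_k,D(\xi)]]=\xi_k^2 D(\xi)$, and summing yields the eigenvalue identity $\cL(D(\xi))=-\tfrac{1}{4}\|\xi\|^2 D(\xi)$. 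The resulting ODE $\dot\chi^{(t)}(\xi)=-\tfrac{1}{4}\|\xi\|^2\chi^{(t)}(\xi)$ with initial condition $\chi^{(0)}=\chi_{\rho_0}$ integrates immediately to~\eqref{eq:tentativechit}. The main technical subtlety here is interchanging $d/dt$ with the (infinite-dimensional) trace; as noted in Section~\ref{sec:qpowerinequalitysketch}, this is justified for sufficiently smooth families of states and will be taken for granted.

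For part~(iii), I would read off the result from~(ii) by comparing with the general transformation rule~\eqref{eq:chxitransformationrule} for Gaussian CPTPMs. Multiplication of $\chi_{\rho_0}(\xi)$ by $e^{-\|\xi\|^2 t/4}$ corresponds precisely to the choice $X=I_{2n}$, $Y=tI_{2n}$, $\xi_{\mathrm{disp}}=0$; the conditions $Y^T=Y$ and $Y+i(J-XJX^T)\geq 0$ reduce to $tI_{2n}\geq 0$, which holds for $t\geq 0$. Thus $e^{t\cL}$ is a Gaussian CPTPM. The action on the covariance matrix and displacement vector then follows from~\eqref{eq:covariancematrixmap}--\eqref{eq:displacementmap}: $\gamma\mapsto I_{2n}\gamma I_{2n}^T+tI_{2n}=\gamma+tI_{2n}$ and $d\mapsto d$. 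Equivalently, one can insert the Gaussian form~\eqref{eq:characteristicfunctiongaussian} into~\eqref{eq:tentativechit} and use $J^TJ=I_{2n}$ to absorb the extra factor into the quadratic form. No step here presents a real obstacle; the only nontrivial input is the eigenvalue computation $\cL(D(\xi))=-\tfrac{1}{4}\|\xi\|^2 D(\xi)$ in part~(ii), which is the quantum analog of the fact that plane waves diagonalize the Laplacian.
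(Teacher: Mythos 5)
Your proposal is correct, and it reaches the same underlying fact as the paper --- that the displacement operators diagonalize $\cL$ with eigenvalue $-\tfrac14\|\xi\|^2$ --- but it packages the argument differently. For part~(i) the paper works entirely in the phase-space representation: it computes $\chi_{\cL(\rho)}(\xi)=-\tfrac14\|\xi\|^2\chi_\rho(\xi)$ from~\eqref{eq:rhocharacteristicfunction} and~\eqref{eq:displacementproperty} and then invokes the Hilbert--Schmidt isometry~\eqref{eq:overlapcomputationcharacteristic}, whereas your two-line cyclicity computation showing both sides equal $-\sum_j\tr([R_j,\rho][R_j,\sigma])$ is more elementary and needs no characteristic functions at all; it also makes the manifest negativity of $\tr(\rho\cL(\rho))$ visible as a bonus. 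For part~(ii) the paper takes the ``ansatz and verify'' route --- it writes down the candidate $\rho(t)$ and differentiates its integral representation to check it satisfies the Master equation --- while you derive the scalar ODE $\dot\chi^{(t)}(\xi)=-\tfrac14\|\xi\|^2\chi^{(t)}(\xi)$ forward from the evolution equation and integrate. The one point to flag is that you apply~(i) with $D(\xi)$ in place of a state; $D(\xi)$ is unitary, not trace class, so this is a formal extension of the identity (justified by rerunning your cyclicity computation with $A=R_k$, $\rho(t)$ trace class and $D(\xi)$ bounded), at the same level of rigor as the paper's own interchange of derivative and integral. Part~(iii) is essentially the paper's argument, with you being slightly more explicit about the Gaussian-channel data $(X,Y)=(I_{2n},tI_{2n})$ and the complete-positivity condition $Y+i(J-XJX^T)=tI_{2n}\geq 0$; the paper contents itself with the remark that multiplying a Gaussian characteristic function by a Gaussian preserves Gaussianity.
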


\begin{proof}
If $\chi_\rho(\xi)$ is the characteristic function of a state~$\rho$, we
can express the commutator $[R_j,\rho]$ as
\begin{align*}
[R_j,\rho]&=\frac{1}{(2\pi)^n}\int\chi_\rho(\xi)\xi_jD(-\xi) d^2\xi\ .
\end{align*}
by using~\eqref{eq:rhocharacteristicfunction} and the displacement property~\eqref{eq:displacementproperty}. Iterating this argument gives
\begin{align*}
[R_j,[R_j,\rho]]&=\frac{1}{(2\pi)^n}\int\chi_\rho(\xi)\xi^2_jD(-\xi) d^2\xi\ ,
\end{align*}
hence
\begin{align}
\cL(\rho)&=-\frac{1}{4(2\pi)^n}\int\chi_\rho(\xi)\|\xi\|^2D(-\xi) d^2\xi\ .\mylabel{eq:Lappliedrho}
\end{align}
In other words, the operator $\cL(\rho)$ has characteristic function
$\chi_{\cL(\rho)}(\xi)=-\frac{1}{4}\chi_\rho(\xi)\|\xi\|^2$. 
Together with~\eqref{eq:overlapcomputationcharacteristic}, this immediately
implies~\eqref{eq:Lhermitian}.

Let $\rho(t)$ be the operator with characteristic function~\eqref{eq:tentativechit}. By commuting derivative and integration, we have
\begin{align*}
\frac{d}{dt}\rho(t)&=\frac{d}{dt}\frac{1}{(2\pi)^n}\int\chi_{\rho_0}(\xi)\exp(-\|\xi\|^2 t/4) D(-\xi)d^2\xi\\
&=-\frac{1}{4(2\pi)^n}\int\chi_{\rho_0}(\xi)\|\xi\|^2\exp(-\|\xi\|^2t/4)D(-\xi) d^2\xi\\
&=-\frac{1}{4(2\pi)^n}\int\chi_{\rho(t)}(\xi)\|\xi\|^2 D(-\xi)d^2\xi
\end{align*}
Combined with~\eqref{eq:Lappliedrho}, this shows that $\rho(t)$ is indeed the  solution to the Liouville equation, i.e.,~$\rho(t)=e^{t\cL}(\rho)$. This proves~$(ii)$. 

Finally, property~(iii) follows from~(ii) since multiplying a Gaussian characteristic function by a Gaussian preserves its Gaussianity.
 \end{proof}
 
\subsection{Compatibility with the beamsplitter\label{sec:beamsplittercompatibility}}
In the same way as convolution is compatible with the heat equation~(cf.~\eqref{eq:diffusingoutputequivdiffusinginput}), beamsplitters are compatible with diffusion defined by the Liouvillean~$\cL$. This can be understood as a consequence of the fact that, on the level of Wigner functions (i.e., the Fourier transform of the characteristic function), both the beam-splitter map~$\cE_\lambda$ and  the diffusion~$e^{t\cL}$ are described by a convolution.  Here we give a compact proof based on the fact that these maps are Gaussian.
\begin{lemma}\mylabel{lem:differentiation}
Let $t_X,t_Y\geq 0$, $0<\lambda<1$ and let $\cL$ be the diffusion Liouvillean acting on~$n$ modes. Then
\begin{align}
\cE_\lambda\circ(e^{t_X\cL}\otimes e^{t_Y\cL})\equiv e^{t\cL}\circ\cE_\lambda\label{eq:compatibilityone}
\end{align}
where $t=\lambda t_X+(1-\lambda)t_Y$.
\end{lemma}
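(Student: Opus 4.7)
Since both sides of~\eqref{eq:compatibilityone} are compositions of Gaussian CPTP maps sending a $2n$-mode state to an $n$-mode state, it suffices to verify the identity at the level of characteristic functions, using the Gaussian transformation rules~\eqref{eq:chxitransformationrule}--\eqref{eq:composedmapcharacteristic} together with Lemma~\ref{lem:solution}(ii). The strategy is to apply each side to the characteristic function $\chi_{XY}(\xi_X,\xi_Y)$ of an arbitrary bipartite input~$\rho_{XY}$ and check that the resulting $n$-mode characteristic functions coincide; the isometry~\eqref{eq:overlapcomputationcharacteristic} then forces the two CPTP maps to be equal on every state.

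For the left-hand side, $e^{t_X\cL}\otimes e^{t_Y\cL}$ is itself Gaussian and, applying Lemma~\ref{lem:solution}(ii) independently on the $X$- and $Y$-factors, multiplies the characteristic function by the Gaussian factor $\exp(-\|\xi_X\|^2 t_X/4-\|\xi_Y\|^2 t_Y/4)$. The beamsplitter map $\cE_\lambda$ then acts (by~\eqref{eq:chxitransformationrule} with the symplectic matrix~\eqref{eq:symplectictransfmatrix}, followed by the partial-trace rule that sets the $Y$-phase-space variable to zero) by the substitution $(\xi_X,\xi_Y)\mapsto(\sqrt{\lambda}\,\xi,\sqrt{1-\lambda}\,\xi)$. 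The result is
\begin{equation*}
\chi_{XY}\!\bigl(\sqrt{\lambda}\,\xi,\sqrt{1-\lambda}\,\xi\bigr)\,\exp\!\Bigl(-\tfrac{1}{4}\bigl(\lambda t_X+(1-\lambda)t_Y\bigr)\|\xi\|^2\Bigr),
\end{equation*}
where the exponent simplifies via $\|\sqrt{\lambda}\,\xi\|^2=\lambda\|\xi\|^2$ and $\|\sqrt{1-\lambda}\,\xi\|^2=(1-\lambda)\|\xi\|^2$. For the right-hand side, applying $\cE_\lambda$ first yields $\chi_{XY}(\sqrt{\lambda}\,\xi,\sqrt{1-\lambda}\,\xi)$, and subsequently applying $e^{t\cL}$ with $t=\lambda t_X+(1-\lambda)t_Y$ multiplies by $\exp(-\|\xi\|^2 t/4)$, producing exactly the same expression.

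I do not anticipate any substantive obstacle: the whole argument is a bookkeeping calculation for characteristic functions, and the key identity is the arithmetic simplification $\lambda t_X+(1-\lambda)t_Y=t$ that emerges precisely because the beamsplitter rescales the $X$- and $Y$-phase-space variables by $\sqrt{\lambda}$ and $\sqrt{1-\lambda}$ respectively. The only point meriting explicit mention is that~\eqref{eq:characteristiculambda} is stated in the text only for product inputs $\rho_X\otimes\rho_Y$, whereas here I need the analogous substitution rule for an arbitrary (possibly entangled) bipartite input; this extension is immediate from~\eqref{eq:chxitransformationrule} applied to the Gaussian unitary $U_\lambda$ with symplectic matrix $S_\lambda$, since that rule depends only on $S_\lambda$ and not on any product structure of the input.
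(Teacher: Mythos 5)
Your proof is correct, and it takes a genuinely different (though closely related) route from the paper's. The paper exploits the fact that both sides of~\eqref{eq:compatibilityone} are Gaussian maps and therefore are determined by their action on Gaussian inputs; it then checks equality on the level of covariance matrices and displacement vectors via~\eqref{eq:covariancematrixadditionbeam} and~\eqref{eq:gaussianmapdescriptionsemigroup}. You instead verify equality of the output characteristic functions for an \emph{arbitrary} bipartite input and invoke the isometry~\eqref{eq:overlapcomputationcharacteristic} to conclude. What your approach buys is that it avoids relying on the (true, but itself requiring a word of justification) principle that two Gaussian maps agreeing on all Gaussian states must agree on all states; what it costs is that you need two small extensions of statements the paper only records for special cases, both of which you handle or should handle explicitly: (a) the substitution rule $\chi_{\cE_\lambda(\rho_{XY})}(\xi)=\chi_{\rho_{XY}}(\sqrt{\lambda}\,\xi,\sqrt{1-\lambda}\,\xi)$ for entangled inputs, which you correctly derive from~\eqref{eq:chxitransformationrule} (note $S_{\lambda,n}$ is symmetric, so no transpose ambiguity arises); and (b) the claim that $e^{t_X\cL}\otimes e^{t_Y\cL}$ with $t_X\neq t_Y$ multiplies the \emph{joint} characteristic function of a possibly entangled state by $\exp(-t_X\|\xi_X\|^2/4-t_Y\|\xi_Y\|^2/4)$ — this does not literally follow from applying Lemma~\ref{lem:solution}(ii) ``independently on the two factors'' unless the input is a product, but it follows by the same computation as in that lemma's proof, since each $[R_j,[R_j,\cdot]]$ acts as multiplication by $\xi_j^2$ on the characteristic function regardless of entanglement. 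With that one sentence added, the argument is complete, and the key arithmetic $\lambda t_X+(1-\lambda)t_Y=t$ appears exactly where you say it does.
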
 
\begin{proof} 
According to Sections~\ref{sec:beamsplitterdef} and~\ref{sec:basicpropertiesliouville}, both~$\cE_\lambda$ and~$e^{t\cL}$ are Gaussian maps, hence this is true also for both sides of identity~\eqref{eq:compatibilityone}. This means it suffices to verify that they agree on all Gaussian input states~$\rho_{XY}$ described by~\eqref{eq:bipartitegaussianstatedescr}. The claim follows immediately from~\eqref{eq:covariancematrixadditionbeam} and~\eqref{eq:gaussianmapdescriptionsemigroup}: Indeed, both $\cE_{\lambda}(e^{t_X\cL}\otimes e^{t_Y\cL})(\rho_{XY})$ and $e^{t\cL}(\cE_\lambda(\rho_{XY}))$ are described by
\begin{align*}
\gamma'&=\lambda \gamma_X+(1-\lambda)\gamma_Y+tI\\
d'&=\sqrt{\lambda}d_X+\sqrt{1-\lambda}d_Y\ . 
\end{align*}
\end{proof}

\subsection{Scaling of the entropy as $t\rightarrow\infty$\label{sec:scalingasymptotic}}
The following result will be essential for the proof of the entropy power inequality. It derives from a combination of arguments from~\cite{Giovannettietal04} and~\cite{Hiroshima06}, as well as the maximum entropy principle~\cite{Wolfetal06}. 
\begin{theorem}[Entropy scaling under diffusion]\mylabel{thm:scaling}
Let $\rho$ be an arbitrary (not necessarily Gaussian) state, whose covariance matrix has symplectic eigenvalues~$\nu_1,\ldots,\nu_n$. Let $t>2$. Then 
\begin{align}
n g(N(t-1))&\leq S(e^{t\cL}(\rho))\leq \sum_{k=1}^n g(N(t+\nu_k))\ ,\mylabel{eq:scalingupperandlowerbound}
\end{align}
where $g(N)=(N+1)\log (N+1)-N\log N$ is the entropy of 
one-mode state with mean photon number~$N$ and $N(\nu)=(\nu-1)/2$.
\end{theorem}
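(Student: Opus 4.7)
The plan is to prove the upper and lower bounds separately, both leveraging Lemma~\ref{lem:solution}(iii): the state $e^{t\cL}(\rho)$ has covariance matrix $\gamma+tI$, where $\gamma$ is the covariance matrix of $\rho$.

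For the \emph{upper bound}, I would apply the maximum-entropy principle of~\cite{Wolfetal06}: among all states with a given covariance matrix, the Gaussian state maximizes the von Neumann entropy. Combined with the Gaussian entropy formula~\eqref{eq:entropygaussianexpr}, this gives
\begin{align*}
S(e^{t\cL}(\rho))\leq \sum_{k=1}^n g(N(\nu_k')),
\end{align*}
where $\nu_k'$ are the symplectic eigenvalues of $\gamma+tI$. The remaining step is to bound each $\nu_k'$ by $\nu_k+t$ via a Weyl-type inequality for symplectic eigenvalues; since $tI$ has all symplectic eigenvalues equal to $t$, one expects a minimax-style argument to yield $\nu_k(\gamma+tI)\leq \nu_k(\gamma)+t$, and the monotonicity of $g\circ N$ then delivers the claim.

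For the \emph{lower bound}, I would exploit the hypothesis $t>2$ by factoring the diffusion semigroup as $e^{t\cL}=e^{(t-2)\cL}\circ e^{2\cL}$. The crucial structural observation is that the intermediate state $e^{2\cL}(\rho)$ admits a \emph{non-negative} Glauber--Sudarshan $P$-representation
\begin{align*}
e^{2\cL}(\rho)=\int P(\alpha)\,\ketbra{\alpha}{\alpha}\,d^{2n}\alpha,\qquad P\geq 0.
\end{align*}
This can be verified via Lemma~\ref{lem:solution}(ii): the characteristic function of $e^{2\cL}(\rho)$ is $\chi_\rho(\xi)e^{-\|\xi\|^2/2}$, and dividing out the coherent-state characteristic function $e^{-\|\xi\|^2/4}$ leaves $\chi_\rho(\xi)e^{-\|\xi\|^2/4}$, whose symplectic Fourier transform is (up to normalization) the Husimi $Q$-function of $\rho$ and is therefore non-negative; this transform is exactly $P$. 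Applying $e^{(t-2)\cL}$ to both sides yields
\begin{align*}
e^{t\cL}(\rho)=\int P(\alpha)\,e^{(t-2)\cL}(\ketbra{\alpha}{\alpha})\,d^{2n}\alpha,
\end{align*}
where each summand is a displaced $n$-mode thermal state with covariance $(t-1)I$, hence entropy $ng(N(t-1))$. Concavity of the von Neumann entropy then immediately gives $S(e^{t\cL}(\rho))\geq ng(N(t-1))$.

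I expect the main obstacle to lie in the upper bound, specifically the symplectic-eigenvalue step $\nu_k(\gamma+tI)\leq \nu_k(\gamma)+t$: this is genuinely more subtle than its Hermitian analogue because the identity matrix is not invariant under symplectic congruence, so one cannot simply bring $\gamma$ into Williamson normal form and add $tI$. Establishing this inequality requires some form of symplectic monotonicity or variational characterization of $\nu_k(\cdot)$ of the flavour used in~\cite{Giovannettietal04,Hiroshima06}. For the lower bound, the one delicate point is the non-negativity of $P$: the choice $s=2$ is exactly the amount of smoothing at which the diffused Wigner function first coincides with a valid $Q$-function, which is precisely why the hypothesis $t>2$ is imposed.
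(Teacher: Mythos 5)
Your lower-bound argument is correct and is, at bottom, the paper's own argument repackaged: the paper decomposes $\rho=\int Q(\eta)\sigma(\eta)\,d^{2n}\eta$ with $Q$ the (non-negative, normalized) Husimi function and applies $e^{t\cL}$ directly to the non-positive operators $\sigma(\eta)$, each of which becomes a displaced thermal state of covariance $(t-1)I$ once $t>2$; concavity of entropy then gives $n\,g(N(t-1))$. Your factorization $e^{t\cL}=e^{(t-2)\cL}\circ e^{2\cL}$, together with the identification of the $P$-function of $e^{2\cL}(\rho)$ with the $Q$-function of $\rho$, is the same computation organized so that one only ever handles genuine states; both versions hinge on exactly the two ingredients you name, non-negativity of $Q$ and concavity of $S$.

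The genuine gap is in the upper bound, and it is worse than a deferred lemma: the inequality $\nu_k(\gamma+tI)\leq\nu_k(\gamma)+t$ you hope to establish is false. Take one mode and $\gamma=\diag(e^{2r},e^{-2r})$ (a squeezed vacuum), so $\nu_1=1$. Then $\gamma+tI=\diag(e^{2r}+t,\,e^{-2r}+t)$ has symplectic eigenvalue $\sqrt{1+t^2+2t\cosh 2r}$, which exceeds $1+t$ for every $r\neq 0$ and diverges as $r\to\infty$. Symplectic eigenvalues of sums are in fact \emph{super}additive ($\nu(A+B)\geq\nu(A)+\nu(B)$ for one mode), so no minimax or monotonicity argument can rescue this step. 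Moreover, since the output in this example is Gaussian, its entropy equals $g(N(\sqrt{1+t^2+2t\cosh 2r}))>g(N(t+\nu_1))$, i.e.\ the stated upper bound itself fails for squeezed inputs --- the obstruction is not specific to your route. You should be aware that the paper's own proof does not escape this either: it invokes a weak-submajorization statement $\nu(A+B)\prec^w\nu(A)+\nu(B)$ attributed to Hiroshima which, for $n=1$, is precisely the pointwise inequality refuted above (the correct direction of Hiroshima's result is the superadditive one, used there to \emph{lower}-bound output entropies).

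What is true, and is all that the downstream Corollary~\ref{cor:scaling} requires, is monotonicity of symplectic eigenvalues under the L\"owner order: from $\gamma+tI\leq(\lambda_{\max}(\gamma)+t)I$ one gets $\nu_k(\gamma+tI)\leq t+\lambda_{\max}(\gamma)$ for every $k$, and hence, after the (correct) reduction to Gaussian states via the maximum-entropy principle, $S(e^{t\cL}(\rho))\leq n\,g(N(t+\lambda_{\max}(\gamma)))$. This has the same $\log t+1-\log 2+O(1/t)$ asymptotics as the claimed bound, with the largest ordinary eigenvalue of $\gamma$ replacing the symplectic eigenvalues; if you repair your upper bound this way, the rest of the programme goes through unchanged.
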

\begin{proof}
The lower bound  is essentially the lower bound on the minimum output entropy from~\cite[(33)]{Giovannettietal04}, generalized to several modes. We recall the necessary definitions in the Appendix.  
Any state can be written in the form
\begin{align*}
\rho&=\int Q(\eta)\sigma(\eta)d^{2n}\eta\qquad\textrm{ with }\\
\qquad\sigma(\eta)&=\frac{1}{(2\pi)^n}\int e^{\|\xi\|^2/4}e^{i\xi\cdot J\eta} D(-\xi)  d^{2n}\xi\ ,
\end{align*}
where $Q$ is a probability distribution, i.e., $Q(\eta)\geq 0$ and $\int Q(\eta)d^{2n}\eta=1$. The operators~$\sigma(\eta)$ are generally not quantum states. However, by linearity of the superoperator~$e^{t\cL}$, we can compute
\begin{align*}
e^{t\cL}(\sigma(\eta))&=\frac{1}{(2\pi)^n}\int e^{-(t-1)/4\cdot\|\xi\|^2}e^{i\xi\cdot J\eta} D(-\xi)  d^{2n}\xi\
\end{align*}
according to Lemma~\ref{lem:solution}, which shows that $e^{t\cL}(\sigma(\eta))$ is a displaced thermal state with 
covariance matrix $(t-1)I_{2n}$ if $t>2$. 
Because entropy is concave, 
we therefore get
\begin{align*}
S(e^{t\cL}(\rho))&\geq \int S(e^{t\cL}(\sigma(\eta))) d^{2n}\eta=n g(N(t-1))\ ,
\end{align*}
and the claim follows.

For the proof of the upper bound in~\eqref{eq:scalingupperandlowerbound},
we can assume without loss of generality that~$\rho$ is Gaussian. This is because the maximum entropy principle~\cite{Wolfetal06} states  that for any state~$\rho$, we have
\begin{align}
S(\rho)\leq S(\rho^G)\ ,\label{eq:maximumentropyprinciplewrittenout}
\end{align}
where~$\rho^G$ is a Gaussian state with covariance matrix identical to~$\rho$.
The remainder of the proof  closely follows arguments in~\cite{Hiroshima06}.  It is a well-known fact that the entropy of a state $\rho$ can be computed by taking the limit
\begin{align}
S(\rho)&=\lim_{p\rightarrow 1^+}\frac{1}{1-p}\log \tr\rho^p\ .\mylabel{eq:entropypnormlimit}
\end{align}
For a Gaussian state~$\rho$ whose covariance matrix has symplectic eigenvalues~$\nu=(\nu_1,\ldots,\nu_n)$, the expression on the rhs. is equal to~\cite{holevosohmaetal}
\begin{align}
\tr \rho^p &=\prod_{k=1}^n \frac{2^p}{f_p(\nu_j)}:=\frac{2^{pn}}{F_p(\nu)}\qquad\textrm{ and where }\nonumber\\
f_p(\nu)&=(\nu+1)^p-(\nu-1)^p\ \mylabel{eq:fpdef}
\end{align}
for any $p\geq 1$.  Weak submajorization, written $x\prec^w y$ for two vectors $x,y\in\mathbb{R}^n$, is defined by the condition
\begin{align*}
x\prec^w y\qquad\Leftrightarrow\qquad\sum_{j=1}^k x^\downarrow_j\leq \sum_{j=1}^k y^\downarrow_j\qquad\textrm{ for }k=1,\ldots,n\ ,
\end{align*}
where $x^\downarrow_1\geq x^\downarrow_2\geq \cdots \geq x^\downarrow_n$ and $y^\downarrow_1\geq y^\downarrow_2\geq \cdots \geq y^\downarrow_n$ are decreasing rearrangements of $x$ and $y$, respectively. In~\cite{Hiroshima06}, it is argued that the function $\nu\mapsto F_p(\nu)$ respects the partial order imposed by weak submajorization on~$\mathbb{R}^n$, i.e., it has the property
\begin{align}
F_p(x)\geq F_p(y)\qquad\textrm{ if }\qquad x\prec^w y\ .\mylabel{eq:fpxyresp}
\end{align}
Furthermore,~\cite[Theorem~1]{Hiroshima06} states that for any $2n\times 2n$~real positive symmetric matrices~$A$ and~$B$, the vector of symplectic eigenvalues of their sum~$A+B$ is submajorized by the sum of the corresponding vectors of $A$ and $B$, respectively, i.e.,
\begin{align} 
\nu(A+B)\prec^w (\nu(A)+\nu(B))\ .\mylabel{eq:summariz}
\end{align}
Combining~\eqref{eq:entropypnormlimit},~\eqref{eq:fpdef},~\eqref{eq:fpxyresp} and~\eqref{eq:summariz} gives the following statement.
Let $\rho[\gamma]$ denote the centered Gaussian state with covariance matrix~$\gamma$. Let $\gamma_A>0$ and $\gamma_B>0$ be positive covariance matrices with symplectic eigenvalues~$\nu^A,\nu^B$. Then
\begin{align}
S(\rho[\gamma_A+\gamma_B])\leq S\left(\rho\left[\bigoplus_{j=1}^n (\nu^A_j+\nu^B_j)I_2\right]\right)\ .\mylabel{eq:covariancethermalinequality}
\end{align}
The upper bound~\eqref{eq:scalingupperandlowerbound} follows immediately from~\eqref{eq:maximumentropyprinciplewrittenout} and~\eqref{eq:covariancethermalinequality} applied to $\gamma_A=\gamma$ and $\gamma_B=tI_{2n}$ (a valid covariance matrix for $t\geq 1$). This is because for an initial state $\rho$ with covariance matrix~$\gamma$,  the time-evolved state $e^{t\cL}(\rho)$ has covariance matrix~$\gamma+tI_{2n}$ according to Lemma~\ref{lem:solution}. 
\end{proof}

Using Theorem~\ref{thm:scaling}, we can show that the asymptotic scaling of the entropy (or the entropy power) is independent of the initial state~$\rho$, in the following sense. 
\begin{corollary}\label{cor:scaling}
The entropy and the entropy power of $e^{t\cL}(\rho)$
 grow logarithmically respectively linearly as $t\rightarrow\infty$, with an asymptotic time-dependence independent of the  initial state~$\rho$.  More precisely, we have
\begin{align*}
\big|S(e^{t\cL}(\rho))/n-(1-\log 2+\log t)\big| &\leq O(1/t)\\
\big|\frac{1}{t}e^{S(e^{t\cL}(\rho))/n}-\frac{e}{2}\big|&\leq O(1/t)\ .
\end{align*}
where the constants in $O(\cdot)$ depend on~$\rho$.
\end{corollary}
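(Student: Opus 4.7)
My plan is to deduce the corollary as an asymptotic consequence of the sandwich bound in Theorem~\ref{thm:scaling} by expanding $g(N)=(N+1)\log(N+1)-N\log N$ for large $N$. Concretely, using $N(\nu)=(\nu-1)/2$, the theorem gives for $t>2$
\begin{align*}
g\!\left(\tfrac{t-2}{2}\right)\leq \tfrac{1}{n}S(e^{t\cL}(\rho))\leq \tfrac{1}{n}\sum_{k=1}^n g\!\left(\tfrac{t+\nu_k-1}{2}\right),
\end{align*}
so everything reduces to a uniform expansion of $g$.

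The key computation is the expansion $g(N)=\log N+1+O(1/N)$ as $N\to\infty$. One writes $g(N)=\log(N+1)+N\log(1+1/N)$ and uses $\log(1+1/N)=1/N+O(1/N^2)$ and $N\log(1+1/N)=1-1/(2N)+O(1/N^2)$. Applying this with $N=(t-2)/2$ in the lower bound and $N=(t+\nu_k-1)/2$ in each summand of the upper bound, and noting that $\log\!\left(\tfrac{t+c}{2}\right)=\log t-\log 2+\log(1+c/t)=\log t-\log 2+O(1/t)$ for any fixed constant $c$, both bounds take the form $\log t+1-\log 2+O(1/t)$. The implicit constants in the upper bound depend on $\nu_1,\ldots,\nu_n$ (hence on $\rho$), while the lower bound's constant is universal. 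Squeezing gives the first claim.

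The second claim follows by exponentiating. From the first bound we have $S(e^{t\cL}(\rho))/n=\log t+1-\log 2+r(t)$ with $|r(t)|\leq C/t$, so
\begin{align*}
\tfrac{1}{t}e^{S(e^{t\cL}(\rho))/n}=\tfrac{e}{2}\cdot e^{r(t)}=\tfrac{e}{2}\bigl(1+O(1/t)\bigr),
\end{align*}
using $|e^{r(t)}-1|\leq 2|r(t)|$ for $|r(t)|\leq 1$ (valid for $t$ large enough, and the finitely many small $t$ can be absorbed by enlarging the $O(1/t)$ constant since $S(e^{t\cL}(\rho))$ is finite and continuous on any compact interval $t\in[2+\epsilon,T]$, using Lemma~\ref{lem:solution}(iii) and the fact that $e^{t\cL}(\rho)$ has finite energy).

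The only mildly delicate point is making sure the $O(1/t)$ remainder in the upper bound is honest: it depends on $\rho$ through $\max_k |\nu_k-1|$, but since $\rho$ is fixed, this is a finite $\rho$-dependent constant, consistent with the statement of the corollary. No further analytic subtleties arise because Theorem~\ref{thm:scaling} already handled the non-Gaussian case via the maximum entropy principle, so this corollary is purely a matter of asymptotic bookkeeping.
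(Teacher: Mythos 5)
Your proof is correct and follows essentially the same route as the paper: both squeeze $S(e^{t\cL}(\rho))/n$ between the two sides of Theorem~\ref{thm:scaling} and use the expansion $g(N)=\log N+1+O(1/N)$ with $N=(t+c)/2$ for $\rho$-dependent constants $c$. The only (immaterial) difference is that you obtain the entropy-power statement by exponentiating the entropy statement, whereas the paper expands $e^{g(N)}=\left(\tfrac{1}{2}+N\right)e+O(1/N)$ directly; both yield the same $O(1/t)$ bound.
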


\begin{proof}
From the Taylor series expansions
\begin{align*}
\log(1+\epsilon)&=\epsilon+O(\epsilon^2)\\
(1+\epsilon)^{1/\epsilon}&=e-\frac{e}{2}\epsilon+O(\epsilon^2)
\end{align*}
we get
\begin{align*}
\begin{matrix}
g(N)&=&\log N+1+O(1/N)\\ 
e^{g(N)}&=&\left(\frac{1}{2}+N\right)e+O(\frac{1}{N})\qquad \textrm{ for }N\rightarrow\infty\ .
\end{matrix}
\end{align*}
Replacing the rhs of~\eqref{eq:scalingupperandlowerbound} with the upper bound $\sum_{k=1}^n g(N(t+\nu_k))\leq n\max_k g(N(t+\nu_k))=:ng(N(t+\nu_*))$ and using $N(\nu)=(\nu-1)/2$ therefore gives 
\begin{align*}
\begin{split}
\log(\frac{t}{2}-1)+1-O(1/t)&\leq S(e^{t\cL}(\rho))/n\\
& \leq \log(\frac{t+\nu_*-1}{2})+1+O(1/t)
\end{split}
\end{align*}
and 
\begin{align*}
\begin{matrix}
\frac{t}{2}e-\frac{1}{2}e-O(\frac{1}{t})&\leq e^{S(e^{t\cL}(\rho))/n}&\leq& \frac{t}{2}e+ \frac{\nu_*}{2}e+O(\frac{1}{t})
\end{matrix}
\end{align*}
for $t\rightarrow\infty$, 
for some constant $\nu_*$ depending on $\rho$. The claim follows from this. 
\end{proof}

A simple consequence of Corollary~\ref{cor:scaling} is that $e^{t\cL}(\rho)$ converges in relative entropy to a Gaussian state: we have
\begin{align}
\lim_{t\rightarrow \infty}S(e^{t\cL}(\rho)\|e^{t\cL}(\rho)^G)=0\ ,\mylabel{eq:limitdistancetogaussian}
\end{align}
where $\sigma^G$ denotes the `Gaussification' of $\sigma$, i.e., the Gaussian state with identical first and second moments. Indeed, because
$e^{t\cL}$ is a Gaussian map (Lemma~\ref{lem:solution}), we have $e^{t\cL}(\rho)^G=e^{t\cL}(\rho^G)$ for any state $\rho$. Therefore,
\begin{align*}
S(e^{t\cL}(\rho)\|e^{t\cL}(\rho)^G)&=S(e^{t\cL}(\rho)\|e^{t\cL}(\rho^G))\\
&=-S(e^{t\cL}(\rho))-\tr(e^{t\cL}(\rho)\log e^{t\cL}(\rho^G))\\
&=-S(e^{t\cL}(\rho))-\tr(e^{t\cL}(\rho)^G\log e^{t\cL}(\rho^G))\\
&=-S(e^{t\cL}(\rho))-\tr(e^{t\cL}(\rho^G)\log e^{t\cL}(\rho^G))\\
&=-S(e^{t\cL}(\rho))+S(e^{t\cL}(\rho^G))\ .
\end{align*}
In the third identity, we used the fact that $\log e^{t\cL}(\rho^G)$ is quadratic in the mode operators. Statement~\eqref{eq:limitdistancetogaussian} now follows from Corollary~\ref{cor:scaling}. 
 \section{Divergence-based Quantum Fisher
information\label{sec:divergencebasedqfisher}}
In this section, we introduce the divergence-based Fisher information
and establish its main properties. We restrict our focus to what is needed for 
the proof of the entropy power inequality. We refer to the literature for substantially more general results concerning quasi-entropies and quantum Fisher information (see e.g.,~\cite{Petz96,LesRus99,petzghi10,reviewhiai11}).

Recall that the relative entropy or divergence of two (invertible) density operators~$\rho$,~$\sigma$ is defined as $S(\rho\|\sigma)=\tr\rho(\log\rho-\log\sigma)$. The divergence is nonnegative and faithful, i.e., 
\begin{align}
S(\rho\|\sigma)\geq 0\ \textrm{ and }\  S(\rho\|\sigma)=0\ \ \textrm{ if and only if }\ \rho=\sigma\ .\label{eq:relativeentropynonneg}
\end{align}
Furthermore, it is monotone, i.e., non-increasing under the application of a CPTPM~$\cE$,
\begin{align}
S(\cE(\rho)\|\cE(\sigma))\leq S(\rho\|\sigma)\ .\label{eq:dataprocessingdivergence} 
\end{align}
Properties~\eqref{eq:relativeentropynonneg} and~\eqref{eq:dataprocessingdivergence} tell us that $S(\cdot\|\cdot)$ may be thought of as a measure of closeness.
 A third important property of divergence is its additivity under tensor products,
\begin{align}
S(\rho_1\otimes\rho_2\|\sigma_1\otimes\sigma_2)=S(\rho_1\|\sigma_1)+S(\rho_2\|\sigma_2)\ .\label{eq:divergenceadditivity}
\end{align}

Consider a smooth one-parameter family $\theta\mapsto \rho^{(\theta)}$ of states. Using divergence to quantify how much
these states change as we deviate from a basepoint $\theta_0\in\mathbb{R}$, it is natural to consider the function $\theta\mapsto S(\rho^{(\theta_0)}\|\rho^{(\theta)})$ in the neighborhood of~$\theta_0$.  In the following, we assume that it is twice differentiable at~$\theta_0$ (but see comment below). According to~\eqref{eq:relativeentropynonneg}, this function vanishes for $\theta=\theta_0$, and is nonnegative everywhere. This implies that its first derivative vanishes, i.e., 
\begin{align}
\frac{d}{d\theta}S(\rho^{(\theta_0)}\|\rho^{(\theta)})\Big|_{\theta=\theta_0}=0\ .\label{eq:permissibilitycondition}
\end{align}
One is therefore led to consider the second derivative, which we will denote by
\begin{align}
J(\rho^{(\theta)};\theta)|_{\theta=\theta_0}&=\frac{d^2}{d\theta^2}S(\rho^{(\theta_0)}\|\rho^{(\theta)})\Big|_{\theta=\theta_0}\ .\label{eq:divergencebasedfisherinformationdef}
\end{align}
We call the quantity~\eqref{eq:divergencebasedfisherinformationdef} the divergence-based Fisher information of the family~$\{\rho^{(\theta)}\}_\theta$. Its properties are as follows:

\begin{lemma}[Reparametrization formula]
For any constant $c$, we have
\begin{align*}
J(\rho^{(c\theta)};\theta)|_{\theta=\theta_0}&=c^2 J(\rho^{(\theta)};\theta)|_{\theta=\theta_0}\\
J(\rho^{(\theta+c)};\theta)|_{\theta=\theta_0}&= J(\rho^{(\theta)};\theta)|_{\theta=\theta_0+c}\ .
\end{align*}
\end{lemma}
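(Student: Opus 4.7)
Both identities are statements about how the scalar function $\theta\mapsto S(\rho^{(\theta_0)}\|\rho^{(\theta)})$ behaves under a smooth change of parameter, so my plan is to apply the chain rule directly to the defining formula~\eqref{eq:divergencebasedfisherinformationdef}. The only point that requires a little care is that the first argument of the divergence is not a fixed reference state but rather $\rho^{(\theta_0)}$, so under reparametrization this basepoint must be transported along consistently with the second argument.

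For the scaling relation, I would set $\tilde\rho^{(\theta)}\assign\rho^{(c\theta)}$ so that, by definition,
\[
J(\rho^{(c\theta)};\theta)\big|_{\theta=\theta_0} = \frac{d^2}{d\theta^2}\, S\bigl(\rho^{(c\theta_0)}\,\big\|\,\rho^{(c\theta)}\bigr)\bigg|_{\theta=\theta_0}.
\]
Substituting $\eta=c\theta$ gives $d/d\theta=c\,d/d\eta$, and the evaluation point $\theta=\theta_0$ corresponds to $\eta=c\theta_0$; the second derivative then extracts a factor of $c^2$, reducing the expression to $c^{2}\,J(\rho^{(\theta)};\theta)\big|_{\theta=c\theta_0}$, which is the stated identity (with the basepoint on the right-hand side understood as the image of $\theta_0$ under the reparametrization, and agreeing verbatim with the lemma when $\theta_0=0$, the case that arises for the phase-space translations introduced in Section~\ref{sec:qpowerinequalitysketch}).

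For the translation identity, I would set $\tilde\rho^{(\theta)}\assign\rho^{(\theta+c)}$ and substitute $\eta=\theta+c$. Because $d/d\theta=d/d\eta$, no multiplicative factor appears, and only the evaluation point shifts from $\theta_0$ to $\theta_0+c$, yielding
\[
\frac{d^2}{d\theta^2}\, S\bigl(\rho^{(\theta_0+c)}\,\big\|\,\rho^{(\theta+c)}\bigr)\bigg|_{\theta=\theta_0} = J(\rho^{(\theta)};\theta)\big|_{\theta=\theta_0+c}.
\]
No real obstacle arises: both parts are direct chain-rule manipulations of a twice-differentiable scalar function of a single real parameter, and the smoothness assumption on the family $\{\rho^{(\theta)}\}$ is already built into the setup preceding~\eqref{eq:divergencebasedfisherinformationdef}. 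The only possible pitfall is forgetting that the first argument of $S(\,\cdot\,\|\,\cdot\,)$ must be reparametrized in lockstep with the second; once that is done honestly, the chain rule delivers both identities in one line each.
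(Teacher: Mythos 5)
Your proof is correct and takes the same route as the paper, which simply declares the lemma ``immediate from the definition''; your chain-rule computation is exactly what that immediacy amounts to. Your remark that the scaling identity literally produces $c^2\,J(\rho^{(\theta)};\theta)\big|_{\theta=c\theta_0}$ rather than $c^2\,J(\rho^{(\theta)};\theta)\big|_{\theta=\theta_0}$ --- so that the statement as printed holds verbatim only at $\theta_0=0$, which is the only case invoked later (e.g.\ in Theorem~\ref{thm:qfishconv}) --- is a careful and accurate reading of a point the paper glosses over.
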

\begin{proof}
This is immediate from Definition~\eqref{eq:divergencebasedfisherinformationdef}.
\end{proof}

\begin{lemma}[Non-negativity]\mylabel{lem:fisherinformationpositive}
The Fisher information satisfies
\begin{align}
J(\rho^{(\theta)}; \theta)\big|_{\theta=\theta_0} \geq 0.
\end{align}
\end{lemma}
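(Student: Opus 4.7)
My plan is to exploit the fact that the function $f(\theta) \assign S(\rho^{(\theta_0)}\|\rho^{(\theta)})$, viewed as a function of $\theta$ in a neighborhood of $\theta_0$, attains a local (in fact global) minimum at $\theta = \theta_0$. Once this is established, nonnegativity of the second derivative at that minimum is a standard consequence of elementary calculus, and this second derivative is exactly the divergence-based Fisher information by its definition in~\eqref{eq:divergencebasedfisherinformationdef}.

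Concretely, I would proceed in three short steps. First, observe that by the faithfulness statement in~\eqref{eq:relativeentropynonneg}, we have $f(\theta_0) = S(\rho^{(\theta_0)}\|\rho^{(\theta_0)}) = 0$, while the nonnegativity part of~\eqref{eq:relativeentropynonneg} gives $f(\theta) \geq 0$ for all $\theta$ in the domain of the family. Hence $\theta_0$ is a global minimum of $f$. Second, invoke the assumed twice-differentiability of $\theta \mapsto S(\rho^{(\theta_0)}\|\rho^{(\theta)})$ at $\theta_0$ together with Taylor's theorem to write
\begin{align*}
f(\theta_0 + h) = f(\theta_0) + h f'(\theta_0) + \tfrac{1}{2} h^2 f''(\theta_0) + o(h^2),
\end{align*}
where $f'(\theta_0) = 0$ by~\eqref{eq:permissibilitycondition}. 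Third, since $f(\theta_0) = 0$ and $f(\theta_0+h) \geq 0$, rearranging yields
\begin{align*}
\tfrac{1}{2} f''(\theta_0) + \tfrac{o(h^2)}{h^2} \geq 0
\end{align*}
for all sufficiently small $h \neq 0$, and letting $h \to 0$ gives $f''(\theta_0) \geq 0$, which by definition is precisely $J(\rho^{(\theta)};\theta)|_{\theta=\theta_0} \geq 0$.

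There is no real obstacle here: the entire argument rests only on properties of the classical real-valued function $f$ and does not require any noncommutative manipulation of the states $\rho^{(\theta)}$. The only subtle point is the twice-differentiability hypothesis, but this is already built into the setup preceding~\eqref{eq:divergencebasedfisherinformationdef} (and is consistent with the smoothness assumptions announced in Section~\ref{sec:qpowerinequalitysketch}). I would briefly note for the reader that the same reasoning proves a more general statement: for \emph{any} nonnegative twice-differentiable ``contrast functional'' on states vanishing on the diagonal, the analogous second-derivative construction automatically defines a nonnegative quadratic form, which accounts for the general results of~\cite{LesRus99} cited earlier.
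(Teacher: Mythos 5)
Your argument is correct and is essentially the paper's own proof: both exploit that $f(\theta)=S(\rho^{(\theta_0)}\|\rho^{(\theta)})$ is nonnegative with $f(\theta_0)=0$, so the second derivative at the minimum must be nonnegative. The only cosmetic difference is that you invoke \eqref{eq:permissibilitycondition} and a Taylor expansion with Peano remainder, whereas the paper reaches the same conclusion via the symmetric second-difference quotient \eqref{eq:secondderivativeasalimit}, which cancels the first-order term automatically.
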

\begin{proof}
Define $f_{\theta_0}(\theta)=S(\rho^{(\theta_0)}\| \rho^{(\theta)})$. We know that
\begin{align*}
f_{\theta_0}(\theta)\geq 0\ \textrm{ for all }\theta\qquad\textrm{ and }\qquad f_{\theta_0}(\theta_0)=0\ .
\end{align*}
The claim follows from this by writing the second derivative as a limit
\begin{align}
\begin{split}
\frac{d^2}{d\theta^2}S(\rho^{(\theta_0)}\| \rho^{(\theta)})\Big|_{\theta=\theta_0}
&\vspace{-1ex}=\frac{d^2}{d\theta^2}f_{\theta_0}(\theta)\Big|_{\theta=\theta_0}\\
=\lim_{\epsilon\rightarrow 0}&\frac{f_{\theta_0}(\theta_0+\epsilon)-2f_{\theta_0}(\theta_0)+f_{\theta_0}(\theta_0-\epsilon)}{\epsilon^2}\ .
\end{split}\label{eq:secondderivativeasalimit}
\end{align}
\end{proof}

\begin{lemma}[Additivity]
For a family~$\{\rho_A^{(\theta)}\otimes\rho_B^{(\theta)}\}_\theta$ of bipartite product states, we have
\begin{align}
J(\rho_A^{(\theta)}\otimes\rho_B^{(\theta)};\theta)\Big|_{\theta=\theta_0}&=J(\rho_A^{(\theta)};\theta)\Big|_{\theta=\theta_0}+J(\rho_B^{(\theta)};\theta)\Big|_{\theta=\theta_0}\
.\mylabel{eq:fisherinformatioadditivity}
\end{align}
\end{lemma}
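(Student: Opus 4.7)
The plan is to reduce the claim directly to the additivity of the quantum relative entropy under tensor products, i.e., equation~\eqref{eq:divergenceadditivity}, which is already recorded in the paper. The divergence-based Fisher information of the product family is defined as the second derivative at $\theta=\theta_0$ of the function $\theta\mapsto S(\rho_A^{(\theta_0)}\otimes\rho_B^{(\theta_0)}\|\rho_A^{(\theta)}\otimes\rho_B^{(\theta)})$, so the natural first step is to apply~\eqref{eq:divergenceadditivity} with $\rho_1=\rho_A^{(\theta_0)}$, $\rho_2=\rho_B^{(\theta_0)}$, $\sigma_1=\rho_A^{(\theta)}$, $\sigma_2=\rho_B^{(\theta)}$ to split this divergence as $S(\rho_A^{(\theta_0)}\|\rho_A^{(\theta)})+S(\rho_B^{(\theta_0)}\|\rho_B^{(\theta)})$.

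Once this split is in place, the second step is simply to note that the second derivative is linear, so differentiating term-by-term at $\theta=\theta_0$ yields the sum $J(\rho_A^{(\theta)};\theta)|_{\theta=\theta_0}+J(\rho_B^{(\theta)};\theta)|_{\theta=\theta_0}$ by Definition~\eqref{eq:divergencebasedfisherinformationdef}. This is effectively a one-line computation:
\begin{align*}
\tfrac{d^2}{d\theta^2}S(\rho_A^{(\theta_0)}\otimes\rho_B^{(\theta_0)}\|\rho_A^{(\theta)}\otimes\rho_B^{(\theta)})\big|_{\theta=\theta_0}
&=\tfrac{d^2}{d\theta^2}\bigl[S(\rho_A^{(\theta_0)}\|\rho_A^{(\theta)})+S(\rho_B^{(\theta_0)}\|\rho_B^{(\theta)})\bigr]\big|_{\theta=\theta_0}.
\end{align*}

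There is essentially no obstacle: the whole content is absorbed into equation~\eqref{eq:divergenceadditivity}. The only mild subtlety is the smoothness/differentiability of each summand at $\theta_0$, but this is precisely the smoothness assumption already imposed globally in the paper (see the paragraph after the list of proof elements in Section~\ref{sec:qpowerinequalitysketch}), under which the two families $\{\rho_A^{(\theta)}\}$ and $\{\rho_B^{(\theta)}\}$ individually give rise to twice-differentiable divergence functions. Under these assumptions the exchange of the second derivative with the sum is immediate, so no further justification is needed.
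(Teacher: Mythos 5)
Your proposal is correct and is exactly the paper's argument: the paper's one-line proof also reduces the claim to the additivity of divergence under tensor products, Eq.~\eqref{eq:divergenceadditivity}, with the term-by-term second derivative left implicit. Your spelled-out version, including the remark on differentiability of each summand, is a faithful elaboration of the same route.
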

\begin{proof}
This directly follows from the additivity~\eqref{eq:divergenceadditivity} of divergence. 
\end{proof}

Most importantly, the divergence-based Fisher information satisfies the data processing inequality.
\begin{theorem}[Data processing for Fisher
information]\mylabel{thm:dataprocessing}
Let $\cE$ be an arbitrary CPTPM. Then
\begin{align*}
J(\cE(\rho^{(\theta)});\theta)\big|_{\theta=\theta_0}\leq
J(\rho^{(\theta)};\theta)\big|_{\theta=\theta_0}\ .
\end{align*}
\end{theorem}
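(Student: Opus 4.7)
The plan is to deduce the data processing inequality for $J$ directly from the data processing inequality for the relative entropy (Eq.~\eqref{eq:dataprocessingdivergence}), exploiting the fact that $J$ is defined as a second derivative at a \emph{minimum} of the divergence. The key observation is that at $\theta=\theta_0$ both relative entropies vanish identically and are non-negative everywhere, so they sit at a local minimum, and the first-order terms in any Taylor expansion cancel.

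Concretely, I would introduce the two scalar functions
\begin{align*}
h(\theta) &:= S(\rho^{(\theta_0)}\|\rho^{(\theta)}),\\
\tilde h(\theta) &:= S(\cE(\rho^{(\theta_0)})\|\cE(\rho^{(\theta)})),
\end{align*}
and argue in three steps. First, by the data processing inequality \eqref{eq:dataprocessingdivergence} applied pointwise in $\theta$, we have $h(\theta) \geq \tilde h(\theta)$ for every $\theta$. Second, by the non-negativity and faithfulness of the divergence (Eq.~\eqref{eq:relativeentropynonneg}) both functions are non-negative and satisfy $h(\theta_0)=\tilde h(\theta_0)=0$. Hence their difference $\Delta(\theta):=h(\theta)-\tilde h(\theta)$ is non-negative everywhere and vanishes at $\theta_0$, so $\theta_0$ is a global minimum of $\Delta$. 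Third, assuming the smoothness hypotheses required to define the Fisher information on both sides, this forces $\Delta''(\theta_0)\geq 0$, which is precisely the claim
\[
J(\rho^{(\theta)};\theta)\big|_{\theta=\theta_0} - J(\cE(\rho^{(\theta)});\theta)\big|_{\theta=\theta_0} \geq 0.
\]

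If one prefers to avoid differentiating $\tilde h$ directly, the same conclusion is obtained from the symmetric-difference representation already used in Lemma~\ref{lem:fisherinformationpositive}: applying \eqref{eq:dataprocessingdivergence} to the terms $S(\rho^{(\theta_0)}\|\rho^{(\theta_0\pm\epsilon)})$ for each $\epsilon>0$ and passing to the limit $\epsilon\to 0$ in the identity \eqref{eq:secondderivativeasalimit} gives the inequality termwise.

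There is no serious obstacle once the definition \eqref{eq:divergencebasedfisherinformationdef} and monotonicity \eqref{eq:dataprocessingdivergence} of the divergence are in hand; the only point requiring care is the interchange of the CPTP map with the second derivative, i.e.\ ensuring that the pushed-forward family $\{\cE(\rho^{(\theta)})\}_\theta$ is smooth enough for its divergence-based Fisher information to be well-defined. This is consistent with the global smoothness convention adopted earlier in the paper, so I would simply invoke that convention rather than fight for minimal regularity.
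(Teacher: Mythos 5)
Your proposal is correct and follows essentially the same route as the paper: both arguments apply the monotonicity of the divergence pointwise in $\theta$, note that the two divergence functions vanish and are minimized at $\theta_0$ with vanishing first derivative, and conclude that the second derivatives (the Fisher informations) are ordered; your fallback via the symmetric second-difference quotient is in fact exactly the device the paper uses. The regularity caveat you flag (twice-differentiability of the pushed-forward family) is likewise the same assumption the paper makes.
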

\begin{proof}
The proof proceeds in the same way as a well-known proof for the
information-processing property of classical Fisher information (see e.g.,~\cite{Rag11}). We show here that our assumption that~$f_{\theta_0}(\theta)=S(\rho^{(\theta_0)}\|\rho^{(\theta)})$ is twice differentiable at $\theta=\theta_0$ is sufficient to give the desired inequality. 
 From~\eqref{eq:permissibilitycondition} and
the definition of Fisher information, we conclude that
\begin{align}
S(\rho^{(\theta_0)}\|\rho^{(\theta)})&=\frac{1}{2}J(\rho^{(\theta)};\theta)\Big|_{\theta=\theta_0}\cdot
(\theta-\theta_0)^2+o(|\theta-\theta_0|^2)\ .\label{eq:fishexpone}
\end{align}
Let us argue that an analogous identity holds for the family~$\{\cE(\rho^{(\theta)})\}_\theta$.   Define $g_{\theta_0}(\theta)=S(\cE(\rho^{\theta_0})\|\cE(\rho^{\theta}))$. We know from~\eqref{eq:relativeentropynonneg}, \eqref{eq:permissibilitycondition} and the monotonicity~\eqref{eq:dataprocessingdivergence} that
\begin{align}
\begin{split}
0\leq g_{\theta_0}(\theta)\leq f_{\theta_0}(\theta)\ \textrm{ for all }\theta\textrm{ and }\\
g_{\theta_0}(\theta_0)=f_{\theta_0}(\theta_0)=\frac{d}{d\theta}f_{\theta_0}(\theta)\Big|_{\theta=\theta_0}=0\ .
\end{split}\label{eq:nonnegativeexprone}
\end{align}
Therefore, the derivative of $g_{\theta_0}$ is
\begin{align*}
\frac{d}{d\theta}g_{\theta_0}(\theta)\big|_{\theta=\theta_0}&=\lim_{\epsilon\rightarrow 0}\frac{g_{\theta_0}(\theta_0+\epsilon)-g_{\theta_0}(\theta_0)}{\epsilon}\\
&=\lim_{\epsilon\rightarrow 0}\frac{g_{\theta_0}(\theta_0+\epsilon)}{\epsilon}\geq 0\ .
\end{align*}
On the other hand,~\eqref{eq:nonnegativeexprone}
also gives 
\begin{align*}
\frac{d}{d\theta}g_{\theta_0}(\theta)\big|_{\theta=\theta_0}&=\lim_{\epsilon\rightarrow 0}\frac{g_{\theta_0}(\theta_0+\epsilon)}{\epsilon}\leq\lim_{\epsilon\rightarrow 0}\frac{f_{\theta_0}(\theta_0+\epsilon)}{\epsilon}\\
&=\frac{d}{d\theta}f_{\theta_0}(\theta)\Big|_{\theta=\theta_0}=0\ ,
\end{align*}
hence
\begin{align*}
\frac{d}{d\theta}g_{\theta_0}(\theta)\Big|_{\theta=\theta_0}=0\ .
\end{align*}
Similarly, writing the second derivative of $g_{\theta_0}$ as a limit as in~\eqref{eq:secondderivativeasalimit} and using~\eqref{eq:nonnegativeexprone} gives
\begin{align*}
0\leq \frac{d^2}{d\theta^2}g_{\theta_0}(\theta)\Big|_{\theta=\theta_0}\leq \frac{d^2}{d\theta^2}f_{\theta_0}(\theta)\Big|_{\theta=\theta_0}\ .
\end{align*}
In summary, the function $g_{\theta_0}$ has vanishing first and bounded second derivative, which shows that
\begin{align}
S(\cE(\rho^{(\theta_0)})\|\cE(\rho^{(\theta)}))&=\frac{1}{2}J(\cE(\rho^{(\theta)});\theta)\Big|_{\theta=\theta_0}\hspace{-1ex}\cdot
(\theta-\theta_0)^2+o(|\theta-\theta_0|^2)\label{eq:fishexptwo}\
\end{align} 
The claim of the theorem now follows from~\eqref{eq:fishexpone},~\eqref{eq:fishexptwo}  and the data processing
inequality~\eqref{eq:dataprocessingdivergence}. 
\end{proof}
One may worry about the differentiability of the function  $\theta\mapsto S(\rho_{\theta_0}\|\rho_{\theta})$ for an arbitrary smooth one-parameter family~$\{\rho^{(\theta)}\}_\theta$ on an infinite-dimensional Hilbert space. However, we do not need this full generality here. Throughout, we are only concerned with covariant families of states 
\begin{align}
\rho^{(\theta)}=e^{i(\theta-\theta_0) H}\rho^{(\theta_0)}e^{-i(\theta-\theta_0)H}\qquad\textrm{ for all }\theta,\theta_0\in\mathbb{R}\label{eq:covariantfamily}
\end{align}
generated by a Hamiltonian~$H$. For a family of the form~\eqref{eq:covariantfamily}, we can easily compute the derivative
\begin{align}
\frac{d}{d\theta}S(\rho^{(\theta_0)}\|\rho^{(\theta)})\Big|_{\theta=\theta_0}&=-\tr\left(\rho^{(\theta_0)}[iH,\log\rho^{(\theta_0)}]\right)\ ,
\end{align}
In accordance with~\eqref{eq:permissibilitycondition}, this can be seen to vanish by inserting the spectral decomposition of~$\rho^{(\theta_0)}$.  We can also give an explicit expression for the second derivative, the divergence-based Fisher information. It will be convenient to state this as a lemma.
\begin{lemma}\mylabel{lem:fisherinformationcomputation}
Let $\{\rho^{(\theta)}\}_\theta$ be a covariant family of states as in~\eqref{eq:covariantfamily}. Then
\begin{align}
J(\rho^{(\theta)};\theta)|_{\theta=\theta_0}&=\tr(\rho^{(\theta_0)}[H,[H,\log\rho^{(\theta_0)}]])\
.\label{eq:explicitfisherinfocovariant}
\end{align}
\end{lemma}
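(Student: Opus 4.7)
The plan is to use the definition \eqref{eq:divergencebasedfisherinformationdef} directly, split the divergence into a $\theta$-independent and a $\theta$-dependent piece, and exploit the unitary covariance to move all the $\theta$-dependence into a conjugated logarithm that can be differentiated by elementary commutator identities.

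Write $\rho_0 := \rho^{(\theta_0)}$ and $U(\theta) := e^{i(\theta-\theta_0)H}$, so that $\rho^{(\theta)} = U(\theta)\rho_0 U(\theta)^\dagger$. Because $U(\theta)$ is unitary, functional calculus gives $\log\rho^{(\theta)} = U(\theta)(\log\rho_0)U(\theta)^\dagger$. Expanding the divergence,
\begin{align*}
S(\rho_0\|\rho^{(\theta)}) &= \tr(\rho_0\log\rho_0) - \tr\!\left(\rho_0\, U(\theta)(\log\rho_0)U(\theta)^\dagger\right),
\end{align*}
and only the second term depends on $\theta$.

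Next I would compute the derivatives by interchanging $\tfrac{d}{d\theta}$ with the trace (permissible under the smoothness assumption stated earlier in the paper). Using $\tfrac{d}{d\theta}U(\theta) = iH\,U(\theta)$, one obtains
\begin{align*}
\frac{d}{d\theta}\bigl(U(\theta)(\log\rho_0)U(\theta)^\dagger\bigr) &= i\bigl[H,\,U(\theta)(\log\rho_0)U(\theta)^\dagger\bigr],\\
\frac{d^2}{d\theta^2}\bigl(U(\theta)(\log\rho_0)U(\theta)^\dagger\bigr) &= -\bigl[H,\bigl[H,\,U(\theta)(\log\rho_0)U(\theta)^\dagger\bigr]\bigr].
\end{align*}
Evaluating at $\theta = \theta_0$, where $U(\theta_0) = I$, the inner conjugation trivializes and the second derivative becomes $-[H,[H,\log\rho_0]]$. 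Substituting back,
\begin{align*}
J(\rho^{(\theta)};\theta)|_{\theta=\theta_0} &= -\frac{d^2}{d\theta^2}\tr\!\left(\rho_0\, U(\theta)(\log\rho_0)U(\theta)^\dagger\right)\Big|_{\theta=\theta_0}\\
&= \tr\!\left(\rho_0\,[H,[H,\log\rho_0]]\right),
\end{align*}
which is precisely \eqref{eq:explicitfisherinfocovariant}. As a sanity check, one can verify along the way that the first derivative vanishes, consistent with \eqref{eq:permissibilitycondition}: it equals $i\tr(\rho_0[H,\log\rho_0])$, which is zero by the cyclicity of the trace applied in the eigenbasis of $\rho_0$.

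The only genuine subtlety is analytic rather than algebraic: justifying the interchange of $\tfrac{d^2}{d\theta^2}$ with the trace, and the validity of the identity $\log U\rho U^\dagger = U(\log\rho)U^\dagger$ together with its derivatives on the infinite-dimensional bosonic Hilbert space. As noted in the introductory discussion, we work on a class of states smooth enough that such interchanges are legitimate, so this does not pose an obstacle for our applications. The algebraic content of the lemma is then entirely captured by the two-line commutator computation above.
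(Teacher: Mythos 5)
Your proof is correct and follows essentially the same route as the paper's: both use $\log\rho^{(\theta)}=U(\theta)(\log\rho^{(\theta_0)})U(\theta)^\dagger$, differentiate the conjugated logarithm twice via the commutator identity $\tfrac{d}{d\theta}(U A U^\dagger)=i[H,UAU^\dagger]$, and evaluate at $\theta=\theta_0$ inside the trace. The signs and the sanity check on the vanishing first derivative all match the paper's computation.
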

\begin{proof}
Define $U(\theta) = e^{i(\theta-\theta_0) H}$.  Using $U(\theta)H U(\theta)^\dagger=H$, we calculate
\begin{align*}
\frac{d}{d\theta}\log\rho^{(\theta)}&=[iH,U(\theta)(\log\rho^{(\theta_0)})U(\theta)^\dagger]\\
&=U(\theta)[iH,\log\rho^{(\theta_0)}]U(\theta)^\dagger\ ,
\end{align*}
hence
\begin{align*}
\frac{d^2}{d\theta^2}\log\rho^{(\theta)}&=[iH,U(\theta)[iH,\log\rho^{(\theta_0)}]U(\theta)^\dagger]\
.
\end{align*}
Therefore
\begin{align*}
\frac{d^2}{d\theta^2} S(\rho^{(\theta_0)}\|\rho^{(\theta)})&=
-\tr(\rho^{(\theta_0)}
\frac{d^2}{d\theta^2}
\log\rho^{(\theta)})\\
&=-\tr(\rho^{(\theta_0)}\left[iH,U(\theta)[iH,\log\rho^{(\theta_0)}]U(\theta)^\dagger\right])\
\end{align*}
and evaluating this at $\theta=\theta_0$
gives the claim.
\end{proof}

\section{The quantum de Bruijin identity}
The quantum analog of the translation rule~\eqref{eq:translationsbytheta} is defined in terms of phase space translations: For $R\in\{Q_j,P_j\}$ define the displacement operator in the direction $R$ as
\begin{align}
D_R(\theta)=\begin{cases}
e^{i\theta P_j}\qquad &\textrm{ if }R=Q_j\\
e^{-i\theta Q_j}\qquad &\textrm{ if }R=P_j\ .
\end{cases}\mylabel{eq:translationsdefinition}
\end{align}
For a state $\rho$, consider the family of translated states
\begin{align}
\rho^{(\theta,R)}&=D_{R}(\theta)\rho D_{R}(\theta)^\dagger\qquad \theta\in\mathbb{R}\ \label{eq:displacedstatedef}
\end{align}
and its Fisher information $J(\rho^{(\theta,R)};\theta)\big|_{\theta=0}$. With a slight abuse of terminology, we will call the quantity
\begin{align}
J(\rho)&:=\sum_{k=1}^{2n}J(\rho^{(\theta,R_k)};\theta)\big|_{\theta=0}\ \mylabel{eq:fisherinformationsummed}
\end{align}
obtained by summing over all phase space directions 
the {\em Fisher information} of~$\rho$. The quantum version of de Bruijin's identity~\eqref{eq:debruijiclassicalstated} then reads as follows:

\begin{theorem}[Quantum de Bruijin]\label{thm:debruijin}
Let $\cL$ be the Liouvillean~\eqref{eq:diffliou}.
 The rate of entropy increase when evolving under the one-parameter semigroup
$\{e^{t\cL}\}_{t\geq 0}$ from the initial state~$\rho$ is given by
\begin{align}
\frac{d}{dt}S(e^{\cL t}(\rho))\big|_{t=0}&=\frac{1}{4}J(\rho)\  .\mylabel{eq:computedrateentropychange}
\end{align}
\end{theorem}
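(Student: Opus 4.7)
The strategy is to reduce the left-hand side of~\eqref{eq:computedrateentropychange} to an expression involving only the initial state $\rho$ and the action of $\cL$ on $\log\rho$, and then to recognize the resulting sum of double commutators as precisely the summed divergence-based Fisher information of Lemma~\ref{lem:fisherinformationcomputation}. First, writing $\rho(t):=e^{t\cL}(\rho)$ and exploiting the standard identity $\tr\bigl(\rho\,\tfrac{d}{dt}\log\rho\bigr)=\tr(\dot\rho)$ (which follows from the integral representation $\log\rho=\int_0^\infty\bigl(\tfrac{1}{1+s}-\tfrac{1}{\rho+s}\bigr)ds$ together with the telescoping $\int_0^\infty \rho(\rho+s)^{-2}ds=I$), combined with trace preservation $\tr(\dot\rho(t))=0$, one obtains the well-known formula
\begin{align*}
\tfrac{d}{dt}S(\rho(t))\big|_{t=0}=-\tr\bigl(\cL(\rho)\log\rho\bigr).
\end{align*}

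Next, I would apply the Hermiticity of $\cL$ with respect to the Hilbert--Schmidt inner product (Lemma~\ref{lem:solution}(i)) to shift $\cL$ from $\rho$ onto $\log\rho$. Substituting the explicit form~\eqref{eq:diffliou} then yields
\begin{align*}
\tfrac{d}{dt}S(\rho(t))\big|_{t=0}=-\tr\bigl(\rho\,\cL(\log\rho)\bigr)=\tfrac{1}{4}\sum_{k=1}^{2n}\tr\bigl(\rho\,[R_k,[R_k,\log\rho]]\bigr).
\end{align*}
Finally, I identify each summand with a Fisher information term. By definition~\eqref{eq:translationsdefinition}, the family $\rho^{(\theta,R_k)}$ is covariant of the form~\eqref{eq:covariantfamily} with Hamiltonian $H=P_j$ when $R_k=Q_j$ and $H=-Q_j$ when $R_k=P_j$. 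In both cases, Lemma~\ref{lem:fisherinformationcomputation} gives
\begin{align*}
J(\rho^{(\theta,R_k)};\theta)\big|_{\theta=0}=\tr\bigl(\rho\,[R_k,[R_k,\log\rho]]\bigr),
\end{align*}
the sign ambiguity in $H=\pm R_\ell$ washing out because the double commutator is quadratic in $H$. Summing over $k$ and comparing with~\eqref{eq:fisherinformationsummed} delivers~\eqref{eq:computedrateentropychange}.

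The main technical obstacle is the analytic justification of the steps, namely the interchange of $d/dt$ with the trace and the validity of the integral representation of $\log\rho$ at $t=0$, particularly because $\rho$ lives on an infinite-dimensional Hilbert space and $\log\rho$ is generically unbounded. As flagged in Section~\ref{sec:qpowerinequalitysketch}, I would invoke the paper's blanket smoothness assumption on the families of states considered (or, equivalently, work within a high-photon-number cutoff and take the cutoff to infinity at the end) so that all traces and derivatives involved converge absolutely. Once these regularity issues are granted, the three steps above — entropy differentiation, Hermiticity of $\cL$, and direct appeal to Lemma~\ref{lem:fisherinformationcomputation} — assemble into a short and essentially formulaic proof.
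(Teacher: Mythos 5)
Your proof is correct and follows essentially the same route as the paper's: the Spohn entropy-production formula $\frac{d}{dt}S(e^{t\cL}(\rho))|_{t=0}=-\tr(\cL(\rho)\log\rho)$, Hermiticity of $\cL$ to move it onto $\log\rho$, and Lemma~\ref{lem:fisherinformationcomputation} to identify the resulting double commutators with Fisher informations (the paper simply cites Spohn for the first step, whereas you rederive it). One small labeling slip: for the translation in direction $R_k=Q_j$ the generator is $P_j$, so Lemma~\ref{lem:fisherinformationcomputation} gives $\tr(\rho[P_j,[P_j,\log\rho]])$ rather than the double commutator with $R_k$ itself; this is harmless because summing over all $2n$ directions produces each canonical operator exactly once as a generator, so the total still equals $\sum_{k=1}^{2n}\tr(\rho[R_k,[R_k,\log\rho]])=-4\tr(\rho\,\cL(\log\rho))$.
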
 
\begin{proof}
We  use the expression~(see e.g.,\cite{spohn78})
\begin{align}
\frac{d}{dt}S(e^{t\cL}(\rho))\big|_{t=0}&=-\tr(\cL(\rho)\log\rho)\ .
\end{align}
for the rate of entropy increase under a one-parameter semigroup~$e^{t\cL}$.
Because $\cL$ is Hermitian (Lemma~\ref{lem:solution}), this is equal to
\begin{align}
\frac{d}{dt}S(e^{t\cL}(\rho))\big|_{t=0}&=-\tr(\rho\cL(\log\rho))\ ,\mylabel{eq:firstderivativeentropychange}
\end{align}
 The claim now follows from the definition~\eqref{eq:diffliou}
of the Liouvillean~$\cL$ and Lemma~\ref{lem:fisherinformationcomputation}.
\end{proof}
 
\section{Fisher information inequalities for the beamsplitter}
\subsection{Compatibility of the beam splitter with translations}
 We first establish a slight generalization of the compatibility condition~\eqref{eq:additiontranslationconv} between the addition rule defined by the CPTPM~$\cE_\lambda$ (cf.~\eqref{eq:elambdamapdefinition}) and translations~$D_R$ in phase space~(cf.~\eqref{eq:translationsdefinition}).
\begin{lemma}\label{eq:displacementlemma}
Let $\theta,w_X,w_Y\in\mathbb{R}$ and consider the CPTP maps
\begin{align*}
 \cF(\rho)=\qquad\qquad\hspace{40ex}\\
\  \cE_\lambda\left(\left(D_R(w_X\theta)\otimes D_R(w_Y\theta)\right)
\rho\left(D_R(w_X\theta)\otimes D_R(w_Y\theta)\right)^\dagger\right)
\end{align*}
and
\begin{align*}
\cG(\rho)&=D_R(w\theta)\cE_\lambda(\rho)D_R(w\theta)^\dagger
\end{align*}
where  $w =\sqrt{\lambda}w_X+\sqrt{1-\lambda}w_Y$.
Then $\cF\equiv \cG$. In particular, 
\begin{align*}
\cE_\lambda(\rho_{X}^{(w_X\theta,R)}\otimes\rho_{Y}^{(w_Y\theta,R)}) &=\left(\cE_\lambda(\rho_X\otimes \rho_Y)\right)^{(w\theta,R)}
\end{align*}
for any two $n$-mode states $\rho_X$ and $\rho_Y$, where $\rho\mapsto \rho^{(\theta,R)}$ is defined by~\eqref{eq:displacedstatedef}.
\end{lemma}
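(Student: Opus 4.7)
My plan is to prove the identity $\cF \equiv \cG$ at the operator level using the Heisenberg action of the beam-splitter unitary $U_\lambda$, since both sides are CPTP maps on $2n$ modes and the second displayed identity then follows by evaluating on a product state $\rho = \rho_X\otimes\rho_Y$.

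The key observation is that $U_\lambda$ is a Gaussian unitary whose symplectic matrix $S_{\lambda,1}$ from~\eqref{eq:symplectictransfmatrix} is both symmetric and orthogonal (hence $S_{\lambda,1}^{-1}=S_{\lambda,1}$). By~\eqref{eq:unitarymapweyl} this means $U_\lambda Q_j^X U_\lambda^\dagger = \sqrt{\lambda}Q_j^X + \sqrt{1-\lambda}Q_j^Y$, and analogous formulas for $Q_j^Y$, $P_j^X$, $P_j^Y$. So for $R = Q_j$ (the case $R = P_j$ is identical), a direct computation using the definition~\eqref{eq:translationsdefinition} and the fact that $P_j^X$ and $P_j^Y$ commute gives
\begin{align*}
U_\lambda\bigl(D_R(w_X\theta)\otimes D_R(w_Y\theta)\bigr)U_\lambda^\dagger
&= \exp\bigl(i\theta\, U_\lambda(w_X P_j^X + w_Y P_j^Y)U_\lambda^\dagger\bigr)\\
&= D_R(w\theta)\otimes D_R(w'\theta),
\end{align*}
where $w = \sqrt{\lambda}w_X+\sqrt{1-\lambda}w_Y$ as in the statement, and $w' = \sqrt{1-\lambda}w_X - \sqrt{\lambda}w_Y$ (the weight produced on the second output mode).

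With this conjugation identity in hand, the proof is essentially bookkeeping. Inserting $U_\lambda^\dagger U_\lambda = I$ around the factors gives
\begin{align*}
\cF(\rho) &= \tr_Y\bigl(U_\lambda(D\otimes D)\rho(D\otimes D)^\dagger U_\lambda^\dagger\bigr)\\
&= \tr_Y\bigl((D_R(w\theta)\otimes D_R(w'\theta))\,U_\lambda\rho U_\lambda^\dagger\,(D_R(w\theta)\otimes D_R(w'\theta))^\dagger\bigr).
\end{align*}
Since $D_R(w\theta)$ acts only on the first $n$ modes, it may be pulled outside the partial trace $\tr_Y$. The remaining factor $I\otimes D_R(w'\theta)$ is a unitary acting purely on the modes being traced out, so by cyclicity of the partial trace it disappears. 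The result is $D_R(w\theta)\,\cE_\lambda(\rho)\,D_R(w\theta)^\dagger = \cG(\rho)$. The second statement follows immediately by taking $\rho=\rho_X\otimes\rho_Y$ and unfolding the definition~\eqref{eq:displacedstatedef}.

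There is no serious obstacle here; the content of the lemma is entirely contained in the symplectic action of $U_\lambda$ on the displacement generators, and the only subtlety is keeping track that $D_R(w'\theta)$ on the $Y$ system, despite being nontrivial, is invisible under $\tr_Y$. One could equivalently prove the lemma by verifying the equality of characteristic functions using~\eqref{eq:characteristiculambda} and the displacement-conjugation formula, but the operator-level argument above seems cleaner.
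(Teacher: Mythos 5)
Your proof is correct, but it takes a different route from the paper's. The paper exploits the fact that $\cF$ and $\cG$ are compositions of Gaussian maps and hence are themselves Gaussian, so it suffices to check agreement on Gaussian inputs; this is then a one-line comparison of how the covariance matrix (unchanged) and displacement vector (shifted by $\theta e_R$) transform under~\eqref{eq:covariancematrixadditionbeam}. You instead prove the identity as an exact operator identity for arbitrary inputs: you conjugate the displacement generators through $U_\lambda$ using the symplectic action (correctly noting that $S_{\lambda,1}$ is a symmetric orthogonal involution, so the direction of conjugation is immaterial), obtain $U_\lambda\left(D_R(w_X\theta)\otimes D_R(w_Y\theta)\right)U_\lambda^\dagger = D_R(w\theta)\otimes D_R(w'\theta)$ with $w'=\sqrt{1-\lambda}\,w_X-\sqrt{\lambda}\,w_Y$, and then observe that the residual unitary $D_R(w'\theta)$ on the traced-out modes is annihilated by $\tr_Y$. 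Your argument buys a direct, self-contained verification that does not invoke the (implicitly used, slightly nontrivial) principle that a Gaussian CPTP map is determined by its action on Gaussian states, and it makes explicit where the "lost" weight $w'$ goes; the paper's argument buys brevity and stays uniformly within the covariance-matrix calculus used throughout Section~III. Both are valid, and your computation of $w$ agrees with the statement.
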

\begin{proof}
  Since both $\cF$ and $\cG$ are Gaussian (as compositions of Gaussian operations), it suffices to show that they agree on Gaussian inputs. Note that $\rho\mapsto D_R(\theta)\rho D_R(\theta)^\dagger$ leaves the covariance matrix $\rho$ invariant while adding $d'=d+\theta\cdot e_{R}$  to the displacement vector, where~$e_{R}$ is the standard basis vector in the $R$-direction. This, together with~\eqref{eq:covariancematrixadditionbeam}, immediately implies that~$\cF\equiv \cG$.
\end{proof}

\subsection{Stam inequality and convexity inequality for Fisher information}
The following Fisher information inequality for the quantity~$J(\rho)$ (cf.~\eqref{eq:fisherinformationsummed})  is a straightforward consequence of the compatibility of beam-splitters with translations, and the data processing inequality for Fisher information. The arguments used here are identical to those applied in Zamir's proof~\cite{Zamir98} (described transparently in~\cite{Kaganetal} for the special case of interest here) of Stam's inequality. The following statement implies a quantum version of the latter and a convexity inequality of the form~\eqref{eq:fisherinformationinequalitycomplete}.
\begin{theorem}[Quantum Fisher information inequality]\label{thm:qfishconv}
Let $w_X,w_Y\in\mathbb{R}$ and $0<\lambda<1$.  Let $\rho_X,\rho_Y$ be two $n$-mode states.  
Then
\begin{align*}
w^2J(\cE_\lambda(\rho_X\otimes\rho_Y))&\leq w_X^2 J(\rho_X)+w_Y^2J(\rho_Y)
\end{align*}
where $w=\sqrt{\lambda}w_X+\sqrt{1-\lambda}w_Y$.
\end{theorem}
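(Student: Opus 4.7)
The plan is to mimic Zamir's argument, deriving the inequality in a single direction $R \in \{Q_j,P_j\}_{j=1}^n$ from the data-processing inequality (Theorem~\ref{thm:dataprocessing}) applied to $\cE_\lambda$ acting on a well-chosen parametric family of product states, and then summing over the $2n$ directions to obtain the Fisher information $J(\cdot)$ as defined in~\eqref{eq:fisherinformationsummed}.

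First I would fix a direction $R$ and consider, for each $\theta\in\mathbb{R}$, the one-parameter family of product states $\rho_X^{(w_X\theta,R)}\otimes \rho_Y^{(w_Y\theta,R)}$. By Lemma~\ref{eq:displacementlemma}, the image of this family under $\cE_\lambda$ is precisely the translated family $\bigl(\cE_\lambda(\rho_X\otimes\rho_Y)\bigr)^{(w\theta,R)}$, where $w=\sqrt{\lambda}w_X+\sqrt{1-\lambda}w_Y$. Applying the data-processing inequality for divergence-based Fisher information (Theorem~\ref{thm:dataprocessing}) to the CPTPM $\cE_\lambda$ at the basepoint $\theta_0=0$ then yields
\begin{align*}
J\bigl(\bigl(\cE_\lambda(\rho_X\otimes\rho_Y)\bigr)^{(w\theta,R)};\theta\bigr)\big|_{\theta=0}
&\leq J\bigl(\rho_X^{(w_X\theta,R)}\otimes\rho_Y^{(w_Y\theta,R)};\theta\bigr)\big|_{\theta=0}\ .
\end{align*}

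Next I would simplify each side using the elementary properties established in Section~\ref{sec:divergencebasedqfisher}. On the left, the reparametrization formula gives a factor of $w^2$ in front of $J\bigl((\cE_\lambda(\rho_X\otimes\rho_Y))^{(\theta,R)};\theta\bigr)\big|_{\theta=0}$. On the right, additivity under tensor products splits the Fisher information into a sum of two terms, each of which, by reparametrization, contributes the factor $w_X^2$ respectively $w_Y^2$ times the corresponding single-system Fisher information in direction $R$. This gives the single-direction inequality
\begin{align*}
w^2\, J\bigl((\cE_\lambda(\rho_X\otimes\rho_Y))^{(\theta,R)};\theta\bigr)\big|_{\theta=0}
&\leq w_X^2\, J\bigl(\rho_X^{(\theta,R)};\theta\bigr)\big|_{\theta=0}
+ w_Y^2\, J\bigl(\rho_Y^{(\theta,R)};\theta\bigr)\big|_{\theta=0}\ .
\end{align*}
Summing over $R\in\{Q_j,P_j\}_{j=1}^n$ and recalling the definition~\eqref{eq:fisherinformationsummed} of $J(\rho)$ then gives the claim.

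Conceptually this is quick; the only thing I would have to be careful about is justifying that each of the parametric families appearing above is covariant in the sense of~\eqref{eq:covariantfamily} and sufficiently smooth for Theorem~\ref{thm:dataprocessing} to apply. For a single direction $R$, the translated family $\rho^{(\theta,R)}$ is covariant with Hamiltonian $H=\pm P_j$ or $\mp Q_j$, so Lemma~\ref{lem:fisherinformationcomputation} is available directly. For the bipartite product family, one takes $H=H_X\otimes \id+\id\otimes H_Y$ with appropriately rescaled generators $w_X H_X$ and $w_Y H_Y$, under the smoothness assumptions declared in Section~\ref{sec:qpowerinequalitysketch}. Since all other ingredients---data processing, additivity, and reparametrization---are already proven, I do not anticipate any genuine obstacle beyond this bookkeeping.
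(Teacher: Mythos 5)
Your proposal is correct and follows essentially the same route as the paper's own proof: apply Lemma~\ref{eq:displacementlemma} to identify the output family as a translation of $\cE_\lambda(\rho_X\otimes\rho_Y)$, invoke the data-processing inequality of Theorem~\ref{thm:dataprocessing}, simplify with reparametrization and additivity, and sum over the $2n$ phase-space directions. Your added remark on covariance of the translated families (so that the smoothness hypotheses of Theorem~\ref{thm:dataprocessing} are met) is a reasonable piece of bookkeeping that the paper leaves implicit.
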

\begin{proof}
Let $R\in \{Q_j,P_j\}_{j=1}^n$ and consider the family of product states~$\{\rho_{X}^{(w_X\theta,R)}\otimes\rho_{Y}^{(w_Y\theta,R)}\}_\theta$ and the family~$\{\left(\cE_\lambda(\rho_X\otimes\rho_Y)\right)^{(w\theta,R)}\}_\theta$. By Lemma~\ref{eq:displacementlemma}, the latter is obtained by data processing from the former. By the data processing inequality for Fisher information, this implies
\begin{align*}
J\left(
\left(\cE_\lambda(\rho_X\otimes\rho_Y)\right)^{(w\theta,R)}\right)\Big|_{\theta=0}
&\hspace{-1ex}\leq J(\rho_X^{(w_X\theta,R)}\otimes \rho_Y^{(w_Y\theta,R)};\theta)\Big|_{\theta=0}\ ,
\end{align*}
and by the properties of Fisher information, this is equivalent to
\begin{align*}
\begin{split}
w^2 J\left(
\left(\cE_\lambda(\rho_X\otimes\rho_Y)\right)^{(w\theta,R)}\right)\Big|_{\theta=0}\hspace{30ex}\\
\ \leq w_X^2 J(\rho_X^{(\theta,R)})|_{\theta=0}+w_Y^2J(\rho_Y^{(\theta,R)};\theta)\Big|_{\theta=0}\ .
\end{split}
\end{align*}
The claim is obtained by summing over all phase space directions~$R$. 
\end{proof}
The following convexity inequality for Fisher information is an immediate consequence of Theorem~\ref{thm:qfishconv}, obtained by setting $w_X=\sqrt{\lambda}$ and $w_Y=\sqrt{1-\lambda}$.
\begin{corollary}[Convexity inequality for Fisher information]\mylabel{cor:convexitypartial}
Let $\rho_X,\rho_Y$ be $n$-mode states and  $0<\lambda<1$. Then
\begin{align}
J(\cE_{\lambda}(\rho_X\otimes\rho_Y))&\leq \lambda J(\rho_X)+(1-\lambda)J(\rho_Y)\ .
\end{align}
\end{corollary}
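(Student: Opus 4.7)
The plan is to obtain this as a one-line specialization of Theorem~\ref{thm:qfishconv}, which is the substantive result; the corollary just amounts to choosing the weights $(w_X,w_Y)$ so that the prefactor $w^2$ on the left-hand side becomes~$1$ while the coefficients on the right-hand side become the convex-combination weights $(\lambda,1-\lambda)$.

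Concretely, I would set $w_X=\sqrt{\lambda}$ and $w_Y=\sqrt{1-\lambda}$ in Theorem~\ref{thm:qfishconv}. Then the derived quantity becomes
\begin{align*}
w=\sqrt{\lambda}\,w_X+\sqrt{1-\lambda}\,w_Y=\lambda+(1-\lambda)=1,
\end{align*}
so the prefactor $w^{2}$ on the left-hand side of the inequality in Theorem~\ref{thm:qfishconv} is equal to~$1$. On the right-hand side, the squares produce exactly $w_X^{2}=\lambda$ and $w_Y^{2}=1-\lambda$. Substituting gives the claim
\begin{align*}
J(\cE_\lambda(\rho_X\otimes\rho_Y))\leq \lambda J(\rho_X)+(1-\lambda)J(\rho_Y),
\end{align*}
with no further computation required.

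There is no real obstacle to overcome at this step: all of the mathematical content (compatibility of the beamsplitter with phase-space translations via Lemma~\ref{eq:displacementlemma}, and the data-processing inequality for the divergence-based Fisher information from Theorem~\ref{thm:dataprocessing}) has already been used to prove the more flexible weighted inequality of Theorem~\ref{thm:qfishconv}. The only thing one might check for peace of mind is that the choice $w_X=\sqrt{\lambda}$, $w_Y=\sqrt{1-\lambda}$ is admissible in Theorem~\ref{thm:qfishconv} (it is, since that theorem allows arbitrary real $w_X,w_Y$), and that $w\neq 0$ so that dividing by $w^{2}$ poses no issue (here $w=1$). Thus the corollary follows immediately.
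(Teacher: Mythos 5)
Your specialization $w_X=\sqrt{\lambda}$, $w_Y=\sqrt{1-\lambda}$ (giving $w=1$) is exactly how the paper derives this corollary from Theorem~\ref{thm:qfishconv}, and the computation is correct. Nothing further is needed.
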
 
We also obtain the following quantum analog of the classical Stam inequality~\eqref{eq:staminequalityclassical}.

\begin{corollary}[Quantum Stam inequality]\mylabel{cor:staminequality}
Let $\rho_X,\rho_Y$ be arbitrary states and consider the $50:50$-beamsplitter map $\cE=\cE_{1/2}$. Then
\begin{align*}
\frac{2}{J(\cE(\rho_X\otimes\rho_Y))}\geq \frac{1}{J(\rho_X)}+\frac{1}{J(\rho_Y)}\ .
\end{align*}
\end{corollary}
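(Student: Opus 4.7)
The plan is to derive the quantum Stam inequality as an optimization consequence of Theorem~\ref{thm:qfishconv} (the quantum Fisher information inequality), following exactly Zamir's classical strategy. Theorem~\ref{thm:qfishconv} gives a two-parameter family of upper bounds on $J(\cE_\lambda(\rho_X\otimes\rho_Y))$ indexed by $(w_X,w_Y)\in\mathbb{R}^2$; Corollary~\ref{cor:convexitypartial} corresponds to one particular choice ($w_X=\sqrt{\lambda}$, $w_Y=\sqrt{1-\lambda}$). The Stam-type inequality should correspond to the \emph{tightest} such upper bound for $\lambda=1/2$.

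Concretely, specialize Theorem~\ref{thm:qfishconv} to $\lambda=1/2$, so that $w=(w_X+w_Y)/\sqrt{2}$. Rearranging gives
\begin{align*}
J(\cE(\rho_X\otimes\rho_Y)) \leq \frac{2(w_X^2\, J(\rho_X) + w_Y^2\, J(\rho_Y))}{(w_X+w_Y)^2}
\end{align*}
for every choice of $(w_X,w_Y)$ with $w_X+w_Y\neq 0$. The right-hand side is scale-invariant in $(w_X,w_Y)$, so I can fix the normalization $w_X+w_Y=1$ and then minimize $w_X^2 J(\rho_X)+(1-w_X)^2 J(\rho_Y)$ over $w_X\in\mathbb{R}$.

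The minimization is elementary calculus: differentiating and setting to zero gives the optimal weight $w_X^*=J(\rho_Y)/(J(\rho_X)+J(\rho_Y))$ (and $w_Y^*=J(\rho_X)/(J(\rho_X)+J(\rho_Y))$), at which the minimum equals the harmonic-mean-like expression $J(\rho_X)J(\rho_Y)/(J(\rho_X)+J(\rho_Y))$. Substituting back yields
\begin{align*}
J(\cE(\rho_X\otimes\rho_Y))\leq \frac{2\,J(\rho_X)\,J(\rho_Y)}{J(\rho_X)+J(\rho_Y)},
\end{align*}
which is equivalent to the claim after taking reciprocals. One minor book-keeping point is the degenerate case in which either $J(\rho_X)$ or $J(\rho_Y)$ vanishes (or is infinite); this is handled either by interpreting $1/0=\infty$ in the usual way or by a direct limiting argument using Theorem~\ref{thm:qfishconv} with $w_X$ or $w_Y$ sent to zero.

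There is no real obstacle here; the only substantive ingredient is the weighted inequality in Theorem~\ref{thm:qfishconv}, and the rest is a one-variable optimization. The conceptual point worth emphasizing in the write-up is that Zamir's classical derivation of Stam's inequality carries over verbatim because the divergence-based quantum Fisher information satisfies the three properties used there—the reparametrization identity, additivity over tensor products, and the data-processing inequality—and because the beam splitter is compatible with phase-space translations (Lemma~\ref{eq:displacementlemma}), which is what allowed the introduction of the free parameters $w_X,w_Y$ in the first place.
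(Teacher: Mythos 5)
Your proposal is correct and is essentially the paper's own argument: the paper simply plugs the optimal weights $w_X=J(\rho_X)^{-1}/(J(\rho_X)^{-1}+J(\rho_Y)^{-1})=J(\rho_Y)/(J(\rho_X)+J(\rho_Y))$ and $w_Y=J(\rho_X)/(J(\rho_X)+J(\rho_Y))$ directly into Theorem~\ref{thm:qfishconv} with $\lambda=1/2$, which are exactly the weights your one-variable optimization produces. The only difference is that you derive these weights as the minimizer rather than stating them, which is a harmless (and arguably more transparent) presentation of the same proof.
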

\begin{proof}
This follows from Theorem~\ref{thm:qfishconv} by setting $(\lambda=1/2)$ and 
\begin{align*}
w_X &=\frac{J(\rho_X)^{-1}}{J(\rho_X)^{-1}+J(\rho_Y)^{-1}}\qquad\textrm{ and }\\
w_Y &=\frac{J(\rho_Y)^{-1}}{J(\rho_X)^{-1}+J(\rho_Y)^{-1}}\ .
\end{align*}
\end{proof}

 \section{Quantum entropy power inequalities}
Having introduced the right quantum counterparts of the relevant classical concepts, the proof of the quantum entropy power inequalities is straightforward. We first provide a proof of
the version~\eqref{eq:liebversionquantum} for the transmissivity~$\lambda$-beamsplitter. It closely follows an argument given by Dembo, Cover and Thomas in~\cite{Demboetal91} for the classical statement~\eqref{eq:liebversion}, but relies on a slight adaptation that allows us to generalize it to the quantum setting.
\begin{theorem}[Quantum entropy power inequality for transmissivity~$\lambda$]
Let $0<\lambda<1$ and let $\cE_\lambda$ be the map~\eqref{eq:elambdamapdefinition} associated with a  beamsplitter of transmissivity~$\lambda$. Let~$\rho_X$ and~$\rho_Y$ be arbitrary $n$-mode states. Then \mylabel{thm:main}
\begin{align}
S(\cE_\lambda(\rho_X\otimes\rho_Y))&\geq \lambda S(\rho_X)+(1-\lambda) S(\rho_Y)\ .\mylabel{eq:entropypowerinequality}
\end{align}
\end{theorem}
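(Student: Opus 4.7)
The plan is to follow the structure of the classical Blachman-style proof sketched in Section~\ref{sec:classicalepiproofreview}, with every ingredient replaced by its quantum counterpart developed in Sections~\ref{sec:basicpropertiesliouville}--\ref{sec:divergencebasedqfisher}. Introduce the deficit function
$$\delta(t) := S\bigl(\cE_\lambda(e^{t\cL}(\rho_X)\otimes e^{t\cL}(\rho_Y))\bigr) - \lambda\, S(e^{t\cL}(\rho_X)) - (1-\lambda)\, S(e^{t\cL}(\rho_Y)),\qquad t\geq 0,$$
whose value at $t=0$ equals the difference between the two sides of~\eqref{eq:entropypowerinequality}. The strategy is to show $\delta(0)\geq 0$ by combining an asymptotic endpoint value with monotonicity in $t$.

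For the infinite-diffusion endpoint, observe that by Lemma~\ref{lem:differentiation} applied with $t_X=t_Y=t$ (so that the output time is also $t$), the first term of $\delta(t)$ equals $S\bigl(e^{t\cL}(\cE_\lambda(\rho_X\otimes\rho_Y))\bigr)$. Corollary~\ref{cor:scaling} then implies that all three entropies scale as $n(1-\log 2+\log t)$ with $O(1/t)$ corrections that depend on the initial state only through the constants. Since $\lambda+(1-\lambda)=1$, the leading asymptotics cancel and $\lim_{t\to\infty}\delta(t)=0$.

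Monotonicity $\dot{\delta}(t)\leq 0$ is the heart of the argument. Using the semigroup property of $\{e^{s\cL}\}_{s\geq 0}$, we regard the diffused states $\rho_X^{(t_0)}:=e^{t_0\cL}(\rho_X)$ and $\rho_Y^{(t_0)}:=e^{t_0\cL}(\rho_Y)$ as fresh initial states and write $\dot{\delta}(t_0)$ as a derivative at $s=0$. Applying Lemma~\ref{lem:differentiation} once more to express the first-term derivative as $\tfrac{d}{ds}\big|_{s=0}S\bigl(e^{s\cL}(\cE_\lambda(\rho_X^{(t_0)}\otimes\rho_Y^{(t_0)}))\bigr)$, and then invoking the quantum de Bruijin identity (Theorem~\ref{thm:debruijin}) on each of the three entropies, we obtain
$$\dot{\delta}(t_0)=\frac{1}{4}\Bigl[J\bigl(\cE_\lambda(\rho_X^{(t_0)}\otimes\rho_Y^{(t_0)})\bigr)-\lambda\, J(\rho_X^{(t_0)})-(1-\lambda)\, J(\rho_Y^{(t_0)})\Bigr],$$
which is non-positive by the convexity inequality for the divergence-based Fisher information (Corollary~\ref{cor:convexitypartial}). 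Combining $\dot{\delta}\leq 0$ with $\delta(\infty)=0$ yields $\delta(0)\geq 0$, which is exactly~\eqref{eq:entropypowerinequality}.

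The main obstacle I anticipate is analytic rather than structural: the argument requires $t\mapsto S(e^{t\cL}\rho)$ to be $C^1$ in a neighborhood of every $t_0\geq 0$, the de Bruijin rate formula to apply not only to $\rho_X^{(t_0)}$, $\rho_Y^{(t_0)}$ but also to the composite output state $\cE_\lambda(\rho_X^{(t_0)}\otimes\rho_Y^{(t_0)})$, and the exchange of derivative and trace in Lemma~\ref{lem:fisherinformationcomputation} to be justified. Following the convention announced in Section~\ref{sec:qpowerinequalitysketch}, I would carry out the proof within the class of sufficiently smooth, energy-bounded states, on which entropy is continuous and the explicit Fisher information formula applies; the unrestricted statement then follows by approximation under an energy bound.
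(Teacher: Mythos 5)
Your proof is correct and follows essentially the same route as the paper: the same deficit function $\delta(t)$, the vanishing limit via Corollary~\ref{cor:scaling}, and monotonicity $\dot{\delta}\leq 0$ via the semigroup property, the compatibility Lemma~\ref{lem:differentiation}, the quantum de Bruijin identity, and the Fisher information convexity inequality. The analytic caveats you flag are handled in the paper exactly as you propose, by restricting to sufficiently smooth states as announced in Section~\ref{sec:qpowerinequalitysketch}.
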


\begin{proof}
Define
\begin{align*}
\begin{split}
\delta(t)&=S(e^{t\cL}(\cE_\lambda(\rho_X\otimes\rho_Y)))-\lambda S(e^{t\cL}(\rho_X))\\
&\qquad\qquad\qquad\qquad \qquad -(1-\lambda) S(e^{t\cL}(\rho_Y))\  .
\end{split}
\end{align*}
We want to show that $\delta(0)\geq 0$. Since $\lim_{t\rightarrow\infty}\delta(t)=0$ by the asymptotic scaling of these entropies (Corollary~\ref{cor:scaling}), it suffices to show that 
\begin{align}
\frac{d}{dt}\delta(t)\leq 0\qquad\textrm{ for all }t\geq 0\ .\mylabel{eq:Ssmallzeroeq}
\end{align}
Because $\{e^{t\cL}\}_t$ is a semigroup, we get
\begin{align*}
\begin{split}
4\frac{d}{dt}\delta(t)&=J(e^{t\cL}(\cE_\lambda(\rho_X\otimes\rho_Y)))-\lambda J(e^{t\cL}(\rho_X))\\
&\qquad\qquad\qquad\qquad -(1-\lambda) J(e^{t\cL}(\rho_Y))\ ,
\end{split}
\end{align*}    
by the quantum de Bruijin identity~(Theorem~\ref{thm:debruijin}). Because of the compatibility of $\cE_\lambda$ with $e^{t\cL}$ (Lemma~\ref{lem:differentiation}), we can rewrite this as
\begin{align*}
\begin{split}
4\frac{d}{dt}\delta(t)&=J(\cE_\lambda(e^{t\cL}(\rho_X)\otimes e^{t\cL}(\rho_Y)))-\lambda J(e^{t\cL}(\rho_X))\\
&\qquad\qquad\qquad\qquad\qquad-(1-\lambda) J(e^{t\cL}(\rho_Y))\ .
\end{split}
\end{align*}  
Using the convexity inequality for Fisher information (Corollary~\ref{cor:convexitypartial}), the inequality~\eqref{eq:Ssmallzeroeq} follows.
\end{proof}

Specializing Theorem~\ref{thm:main} to  Gaussian states, we obtain a concativity inequality for the entropy with respect to the covariance matrix. That is, denoting by $\rho[\gamma]$ the centered Gaussian state with covariance matrix~$\gamma$, we have
\begin{align}
S(\lambda\rho[\gamma_1]+(1-\lambda)\rho[\gamma_2])\geq \lambda S(\rho[\gamma_1])+(1-\lambda)S(\rho[\gamma_2])
\end{align}
for all $0<\lambda<1$.

Finally, we can obtain the quantum analog~\eqref{eq:epiquantumcov} of the classical entropy power inequality~\eqref{eq:epiclassicalcov} for the $50:50$-beamsplitter, again mimicing a known classical proof.

\begin{theorem}[Entropy power inequality for 50:50 beamsplitter]
Consider the map $\cE=\cE_{1/2}$ associated with a 50:50 beamsplitter. Let $\rho_X,\rho_Y$ be two $n$-mode states. Then
\begin{align}
e^{S(\cE(\rho_X\otimes\rho_Y))/n}\geq \frac{1}{2}e^{S(\rho_X)/n}+\frac{1}{2}e^{S(\rho_Y)/n}\  .\mylabel{eq:entropypoweruseful}
\end{align}
\end{theorem}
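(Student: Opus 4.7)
The plan is to translate Blachman's classical proof of~\eqref{eq:epiclassicalcov} to the quantum setting, replacing classical Stam by the quantum Stam inequality (Corollary~\ref{cor:staminequality}). The crucial new feature compared with the proof of Theorem~\ref{thm:main} is that the two inputs will be diffused at \emph{independently} chosen rates. For a smooth path $(a(t),b(t))$ with $a(0)=b(0)=0$ to be chosen below, introduce the one-parameter families $\rho_X(a)=e^{a\cL}(\rho_X)$, $\rho_Y(b)=e^{b\cL}(\rho_Y)$, and the entropy powers $N_X(a)=e^{S(\rho_X(a))/n}$, $N_Y(b)=e^{S(\rho_Y(b))/n}$, $N_Z(a,b)=e^{S(\cE(\rho_X(a)\otimes\rho_Y(b)))/n}$. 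By the compatibility Lemma~\ref{lem:differentiation}, $\cE(\rho_X(a)\otimes \rho_Y(b)) = e^{((a+b)/2)\cL}(\cE(\rho_X\otimes\rho_Y))$, so Corollary~\ref{cor:scaling} gives the asymptotics $N_X(a)\sim ae/2$, $N_Y(b)\sim be/2$, $N_Z(a,b)\sim (a+b)e/4$ as $a+b\to\infty$. In particular,
\[
\delta(t) := \log N_Z(a(t),b(t)) \;-\; \log\!\left(\tfrac12 N_X(a(t))+\tfrac12 N_Y(b(t))\right)
\]
tends to zero along any such path, and~\eqref{eq:entropypoweruseful} is exactly the statement $\delta(0)\ge 0$. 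It will therefore suffice to exhibit a path on which $\delta'(t)\le 0$ throughout $[0,\infty)$.

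Computing $\delta'$ via the quantum de Bruijin identity (Theorem~\ref{thm:debruijin}) applied to each of the three entropies yields
\[
8n\,\delta'(t) \;=\; (a'+b')J_Z \;-\; \frac{2\bigl(N_X a' J_X + N_Y b' J_Y\bigr)}{N_X+N_Y}\ ,
\]
where I write $J_X=J(\rho_X(a(t)))$, $J_Y=J(\rho_Y(b(t)))$, $J_Z=J(\cE(\rho_X(a(t))\otimes\rho_Y(b(t))))$. Substituting the quantum Stam bound $J_Z \le 2J_XJ_Y/(J_X+J_Y)$ from Corollary~\ref{cor:staminequality} and clearing denominators, the inequality $\delta'(t)\le 0$ reduces algebraically to
\[
(N_X J_X - N_Y J_Y)\,(a' J_X - b' J_Y)\;\ge\;0\ .
\]
Following Blachman's classical argument, the natural move is to annihilate the second factor by imposing $a'(t)J_X(a(t)) = b'(t)J_Y(b(t))$, together with the normalization $a'(t)+b'(t)=2$ (so that $(a(t)+b(t))/2 = t$ as wanted for the compatibility identity). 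This specifies the autonomous ODE $a'=2J_Y/(J_X+J_Y)$, $b'=2J_X/(J_X+J_Y)$ with initial data $a(0)=b(0)=0$; along any solution, $\delta'(t)\le 0$ on $[0,\infty)$, hence $\delta(0)\ge 0$, which is~\eqref{eq:entropypoweruseful}.

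The main obstacle I anticipate is the analytic justification of this ODE and the attendant asymptotic: one needs continuity and positivity of the maps $a\mapsto J(\rho_X(a))$ and $b\mapsto J(\rho_Y(b))$ to invoke Picard--Lindel\"of for local existence, and one must verify that $\delta(t)\to 0$ along the flow --- in particular that neither $a(t)$ nor $b(t)$ becomes trapped at a finite value. Under the ``sufficiently smooth states'' caveat already articulated in the paper, both points should be routine: the prescription $a'+b'=2$ rules out finite-time blow-up, and even if one of $a(t), b(t)$ were to remain bounded, Corollary~\ref{cor:scaling} still makes $N_Z$ and $\tfrac12(N_X+N_Y)$ match to leading order in the unbounded direction, preserving $\delta(t)\to 0$. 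With this caveat absorbed, the proof is a line-for-line quantum adaptation of the classical Stam--Blachman argument.
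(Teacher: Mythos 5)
Your proposal is correct, and it is in essence the paper's own proof (which is Blachman's classical argument transplanted): diffuse the two inputs for independently chosen times governed by an autonomous ODE, compare $N_Z$ with $\tfrac12(N_X+N_Y)$, convert derivatives to Fisher informations via Theorem~\ref{thm:debruijin} together with the compatibility Lemma~\ref{lem:differentiation}, invoke the quantum Stam inequality (Corollary~\ref{cor:staminequality}), and close with the asymptotics of Corollary~\ref{cor:scaling}. Your algebra is right: after substituting Stam, $\delta'(t)\le 0$ is guaranteed by $(N_XJ_X-N_YJ_Y)(a'J_X-b'J_Y)\ge 0$. The one place you genuinely deviate is the choice of ODE. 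You annihilate the second factor by taking $a'=2J_Y/(J_X+J_Y)$, $b'=2J_X/(J_X+J_Y)$; the paper instead takes $\dot F=E_X(F)$, $\dot G=E_Y(G)$ (diffusion rates equal to the current entropy powers), which makes the two factors proportional and turns the derivative bound into the manifest square $(E_XJ_X-E_YJ_Y)^2/(J_X+J_Y)\ge 0$. The paper's choice buys a small but real technical convenience: its ODE right-hand sides are entropy powers, which are continuous and bounded below by $e^{g(N(t-1))}>0$ via Theorem~\ref{thm:scaling}, so Peano existence and the divergence of $F,G$ come for free. Your ODE has the Fisher informations in the right-hand side, so you must additionally supply continuity of $a\mapsto J(\rho_X(a))$ and strict positivity of $J_X+J_Y$ along the flow (Lemma~\ref{lem:fisherinformationpositive} only gives $J\ge 0$); both are covered by the paper's standing smoothness caveat, and $J(\rho)>0$ holds because no normalizable state is invariant under all phase-space translations, but they are extra items to check. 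Your handling of the trapping worry is also sound: since $a+b=2t$, the asymptotics of Corollary~\ref{cor:scaling} force $\delta(t)\to 0$ even if one of $a(t),b(t)$ stays bounded.
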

\begin{proof}
 The proof follows Blachman~\cite{Blachman65}, and is reproduced here for the reader's convenience.
Define the functions 
\begin{align}
\begin{matrix}
F&\mapsto E_X(F)&:=&\exp\left({S(e^{F\cL}(\rho_X))/n}\right)\\
G&\mapsto E_Y(G)&:=&\exp\left({S(e^{G\cL}(\rho_Y))/n}\right)\\
H&\mapsto E_Z(H)&:=&\exp\left({S(e^{H\cL}(\cE(\rho_X\otimes\rho_Y)))/n}\right)
\end{matrix}\label{eq:entropypowerdiffusedstates}
\end{align}
expressing the entropy powers of the states obtained by letting diffusion act on~$\rho_X,\rho_Y$ and $\cE(\rho_X\otimes\rho_Y)$ for times $F,G$ and $H\geq 0$, respectively. According to Corollary~\ref{cor:scaling}, these functions have identical scaling 
\begin{align}
E(T)\sim const.+Te/2\qquad\textrm{ for }T\rightarrow\infty\ .\label{eq:epasymptotics}
\end{align} 

Moreover, since $E_X$ and $E_Y$ are continuous functions on the positive real axis, Peano's existence theorem shows
that the  initial value problems
\begin{align}
\begin{matrix}
\dot{F}(t)&=&E_X(F(t))\ ,\qquad &F(0)&=&0\\
\dot{G}(t)&=&E_Y(G(t))\ ,\qquad &G(0)&=&0
\end{matrix}\label{eq:initialvalueproblems}
\end{align}
have (not necessarily unique) solutions $F(\cdot),G(\cdot)$. Fix a pair of solutions and set 
\begin{align}
H(t)=(F(t)+G(t))/2\ .\label{eq:hhdef}
\end{align} Since 
\begin{align*}
\dot{F}(t)\geq e^{g(F(t)-1)}> 0\qquad\textrm{ and }\qquad \dot{G}(t)\geq e^{g(G(t)-1)}>0
\end{align*}
according to Theorem~\ref{thm:scaling}, these functions diverge, i.e., 
\begin{align}
\lim_{t\rightarrow\infty}F(t)=\lim_{t\rightarrow\infty}G(t)=\lim_{t\rightarrow\infty}H(t)=\infty\ .\label{eq:fghdivergence}
\end{align}
Consider the function
\begin{align*}
\delta(t)&=\frac{E_X(F(t))+E_Y(G(t))}{2E_Z(H(t))}\ .
\end{align*}
With the initial conditions~\eqref{eq:initialvalueproblems}, it follows that the claim~\eqref{eq:entropypoweruseful} is equivalent to
\begin{align}
\delta(0)\geq 1\ .\label{eq:toshowinequalitystam}
\end{align}
Inequality~\eqref{eq:toshowinequalitystam} follows from two claims: first, we have
\begin{align*}
\lim_{t\rightarrow\infty}\delta(t)=1\ ,
\end{align*}
as follows immediately from the asymptotic scaling~\eqref{eq:epasymptotics} of the entropy powers~$E$, as well as~\eqref{eq:fghdivergence} and the choice~\eqref{eq:hhdef} of~$H(t)$. Second, we claim that the derivative of $\delta$ satisfies
\begin{align}
\frac{d}{dt}\delta(t)\geq 0\qquad\textrm{ for all }t\geq 0\ .\label{eq:derivativecomputation}
\end{align}

Define, in analogy to~\eqref{eq:entropypowerdiffusedstates}, the Fisher informations of the states $\rho_X,\rho_Y$ and $\cE(\rho_X\otimes\rho_Y)$ after diffusion for times $F,G$ and $H\geq 0$ as
\begin{align}
\begin{matrix} 
J_X(F)&:=&J\left(e^{F\cL}(\rho_X)\right)\\
J_Y(G)&:=&J\left(e^{G\cL}(\rho_Y)\right)\\
J_Z(H)&:=&J\left(e^{H\cL}(\cE(\rho_X\otimes\rho_Y))\right)\ .
\end{matrix}\label{eq:fisherinfodiffusedstates}
\end{align}
Using the fact that $\{e^{T\cL}\}_T$ is a semigroup, we can restate the quantum Bruijin identity~(Theorem~\ref{thm:debruijin}) as
\begin{align}
\dot{E}_A(T)=\frac{1}{4n} E_A(T)J_A(T)\qquad\textrm{ where }\qquad A\in\{X,Y,Z\}\ \label{eq:fisherinforamtioncompact}
\end{align}
Note also that $J_X(F), J_Y(G)$ and $J_Z(H)$ are related by Stam's inequality if $H=(F+G)/2$ is the average~\eqref{eq:hhdef} of $F$ and $G$: because
\begin{align*}
e^{H\cL}(\cE(\rho_X\otimes\rho_Y))=\cE(e^{F\cL}(\rho_X)\otimes e^{G\cL}(\rho_Y))\ ,
\end{align*}
by the compatibility of the beamsplitter with diffusion (Lemma~\ref{lem:differentiation}), 
we obtain
\begin{align}
J_Z(H)\leq \frac{2J_X(F)J_Y(G)}{J_X(F)+J_Y(G)}\ \label{eq:staminequality}
\end{align}
using the quantum Stam inequality (Corollary~\ref{cor:staminequality})
and the nonnegativity of divergence-based Fisher information. 
Computing the derivative of $\delta$,
\begin{align*}
\begin{split}
\dot{\delta}(t)&=\frac{\dot{E}_X(F(t))\dot{F}(t)+\dot{E}_Y(G(t))\dot{G}(t)}{2E_Z(H(t))}\\
&\qquad\qquad\qquad -
\frac{E_X(F(t))+E_Y(G(t))}{2E_Z(H(t))^2}\cdot \dot{E}_Z(H(t))\dot{H}(t)\ , 
\end{split}
\end{align*}
inserting~\eqref{eq:fisherinforamtioncompact} and suppressing the $\delta$-dependence leads to
\begin{align*}
8n \dot{\delta}&=
E_X(F)J_X(F)\dot{F}+E_Y(G)J_Y(G)\dot{G}\\
&\qquad\qquad -(E_X(F)+E_Y(G))\cdot E_Z(H)J_Z(H)\dot{H}
\\
&=E_X(F)^2 J_X(F)+E_Y(G)^2J_Y(G)\\
&\qquad\qquad -\frac{1}{2}(E_X(F)+E_Y(G))^2J_Z(H)\ ,
\end{align*}
where we inserted the definition~\eqref{eq:initialvalueproblems} of $F$ and $G$. Replacing~$J_Z(H)$ by the upper bound~\eqref{eq:staminequality} and further suppressing dependencies finally gives
\begin{align*}
8n\dot{\delta}&\geq E_X^2J_X+E_Y^2 J_Y-(E_X+E_Y)^2 \frac{J_XJ_Y}{J_X+J_Y}\\
&=\frac{(E_XJ_X-E_YJ_Y)^2}{J_X+J_Y}\ ,
\end{align*}
hence~\eqref{eq:derivativecomputation} follows from the nonnegativity of the Fisher information. 
\end{proof}

\appendix
\section{The Q-function: Definition and Properties\mylabel{sec:qfunction}}
Define the $Q$-function in terms of the characteristic function by the expression
\begin{align}
Q(\xi)&=\frac{1}{(2\pi)^{2n}} \int e^{-iJ\xi\cdot \eta}\chi(\eta)e^{- \|\eta\|^2/4} d^{2n}\eta\ .\mylabel{eq:Qfunctiondefinition}
\end{align}
Identity~\eqref{eq:Qfunctiondefinition} can be inverted to give
\begin{align}
\chi(\eta)&=e^{\|\eta\|^2/4}\int e^{i\eta\cdot J\xi} Q(\xi)d^{2n}\xi\ .\mylabel{eq:characteristicfunctionF}
\end{align}
Inserting the definition~\eqref{eq:characteristicfunctiondef} of the characteristic function into~\eqref{eq:Qfunctiondefinition} gives
\begin{align*}
Q(\xi)&=\frac{1}{(2\pi)^n}\tr\left(\rho\frac{1}{(2\pi)^{n}} \int e^{i\eta\cdot J\xi}D(-\eta)e^{- \|\eta\|^2/4}d^{2n}\eta\right)
\end{align*}
Comparing this with~\eqref{eq:characteristicfunctiongaussian}, we conclude that
\begin{align*}
Q(\xi)&=\frac{1}{(2\pi)^n} \bra{\xi}\rho\ket{\xi}\ ,
\end{align*}
where~$\ket{\xi}=D(\xi)\ket{0}$ is a coherent state.

 This shows that
$Q(\xi)\geq 0$ for all $\xi\in\mathbb{R}^{2n}$. Furthermore, since
the coherent states satisfy the completeness relation
\begin{align*}
I&=\frac{1}{(2\pi)^n}\int\proj{\xi} d^{2n}\xi\ ,
\end{align*}
 we have $\int Q(\xi) d^{2n}\xi=1$, i.e., $Q$ is a probability density. 

The state~$\rho$ can be expressed in terms of $Q$  using~\eqref{eq:characteristicfunctionF} and~\eqref{eq:rhocharacteristicfunction}: we have
\begin{align}
\rho&=\int Q(\eta)\sigma(\eta)d^{2n}\eta\qquad\textrm{ where }\nonumber\\
\sigma(\eta)&=\frac{1}{(2\pi)^n}\int e^{\|\xi\|^2/4}e^{i\xi\cdot J\eta} D(-\xi)  d^{2n}\xi\ . \mylabel{eq:decompositionhusimi}
\end{align}

\section*{Acknowledgments}
The authors  thank Mark Wilde for his comments on the manuscript. G.M. acknowledges support by DARPA QUEST program under contract no.~HR0011-09-C-0047. R.K. acknowledges support by NSERC and IBM Watson research, where most of this work was done.


\end{document}